\documentclass[pdflatex,sn-mathphys-ay]{sn-jnl}% Math and Physical Sciences Author Year Reference Style
\theoremstyle{thmstyleone}%
\newtheorem{theorem}{Theorem}%  meant for continuous numbers
\newtheorem{proposition}{Proposition}% to get separate numbers for theorem and proposition etc.

\theoremstyle{thmstyletwo}%

\theoremstyle{thmstylethree}%
\newtheorem{definition}{Definition}%

\usepackage{graphicx} % Required for inserting images
\usepackage{amsmath, amsfonts, amssymb, amsthm}
\usepackage{derivative}
\usepackage{natbib}
\usepackage{xcolor}
\usepackage{enumitem}
\usepackage{booktabs}
\usepackage{siunitx} 

\renewcommand{\Re}{\operatorname{Re}}
\renewcommand{\Im}{\operatorname{Im}}

\title{Conservative deterministic Markov models in mathematical biology: uniqueness of steady states, reversibility and computational methods}
\date{2026}

\author*[1]{\fnm{Joseph G.} \sur{Shuttleworth}}\email{joey.shuttleworth@nottingham.ac.uk}
\author[1]{\fnm{Simon P.} \sur{Preston}}
\author[2,3]{\fnm{Chon Lok} \sur{Lei}}
\author[1]{\fnm{Etienne} \sur{Farcot}}
\author[1]{\fnm{Gary R.} \sur{Mirams}}

\affil[1]{\orgdiv{Centre for Mathematical Medicine and Biology, School of Mathematical Sciences}, \orgname{University of Nottingham}, \orgaddress{\city{Nottingham}, \postcode{NG7 2RD}, \state{} \country{United Kingdom}}}
\affil[2]{\orgdiv{Institute of Translational Medicine, Faculty of Health Sciences}, \orgname{University of Macau}, \orgaddress{ \state{Macau}, \country{China}}}
\affil[3]{\orgdiv{Department of Biomedical Sciences, Faculty of Health Sciences}, \orgname{University of Macau}, \orgaddress{\state{Macau}, \country{China}}}

\abstract{
    Ordinary differential equations are commonly used throughout the sciences to build mechanistic models of time-dependent processes. Often, such models are Markov models describing the time-evolution of different interconnected ``states''. When these models have no ``sources'' or ``sinks'', they naturally conserve the total population of the system. 
    When each state is reachable (directly or indirectly) from any other state, these models are called irreducible.
    For many applications, a deterministic system of ordinary differential equations (ODE) is the most suitable modelling approach. 
    We summarise important mathematical results which show that this irreducibility property guarantees the existence and uniqueness of global stable equilibria, and discuss the computationally-efficient implementation of such ODE-based models. 
    We also discuss the condition of microscopic reversibility and show how it guarantees non-oscillatory behaviour, which enables additional efficiencies in computations. 
    These properties and methods are demonstrated through example models of biological phenomena, where we demonstrate their importance for efficient model fitting and simulation.
}

\begin{document}
\maketitle
\section{Introduction} \label{sec:introduction}
Mathematical models are commonly used to describe time-varying phenomena throughout mathematical biology.  
Mathematical models in which the evolution of the system in question is dependent only on the instantaneous state of the system, and not past behaviour, are said to satisfy the \emph{Markov property} \citep{kelly_reversibility_1981}. 
Whilst such models may be formulated probabilistically, deterministic approximations can often be suitable. 
This may be true when the model describes a large number of individuals which are each subject to the same dynamics, for example \cite{gillespie_deterministic_2009}.
Here, we focus on the properties of deterministic Markov models and provide two case studies concerning mathematical biology models.

Though relevant in some mathematical biology contexts \citep{kepler_stochasticity_2001, gupta_stochastic_2011, clancy_na_2002, clerx_four_2019}, basic results on the properties of such models are often not discussed in detail, or are discussed within a probabilistic setting \citep{lin_efficient_2018}. 
Results concerning deterministic formulations are covered in general terms and technical detail in existing literature \citep{BermanAbraham1979Nmit, seneta06}, but, authors typically discuss broader classes of problems and the relevant results are not always contextualised with examples pertaining to mathematical biology. 

In this article, we present a small collection of essential methods and properties, which prove to be important in cell-biology applications. For example, the method for steady-state computation presented by \citet{fink_markov_2009} relies on \emph{microscopic reversibility}, a property satisfied by the certain models that they discuss. However, some models used in practice do not satisfy this property, and we give a general derivation of a method that can be applied to any relevant model, irrespective of microscopic reversibility.
Similarly, \citet{keener2009mathematical} discuss Markov models in the context of cellular electrophysiology and introduce specific examples and discuss large-time behaviour. 
We provide a more general discussion, 
exploring the existence and uniqueness of Markov models more generally, covering the aforementioned models discussed in \citet{keener2009mathematical}, and any other model satisfying the conditions outlined below.

Our primary focus is the class of ODE-based models which approximate finite-state \emph{continuous-time Markov chains} (CTMCs) \citep{kelly_reversibility_1981, seneta06, wilkinson2018stochastic}. 
These CTMCs describe a single entity stochastically transitioning between some collection of states at randomly distributed times, where the distribution of waiting times and subsequent state-to-state transition is dependent only on the current state. 
Assuming a finite collection of states, we consider an \(N \times N\) \emph{transition-rate matrix}, \(\mathbf{Q}\), where \(N\) is the number of states, and for \(i, j \leqslant N\) with \(i\neq j\), the \(j^\text{th}\) entry in the \(i^\text{th}\) row is \(Q_{i, j}\)---the transition rate from state \(i\) to state \(j\).
For all examples provided in this article, the matrix \(\mathbf{Q}\) is either constant, or time-dependent such that for any \(i, j\) either \(Q_{i,j}(t) = 0\) for all \(t\) or \(Q_{i, j}(t) > 0\) for all \(t\). 
In this way, the \emph{model topology} (the state-to-state connections described by the model) does not change over time.

The results discussed in Section~\ref{sec:connectedness} concern only the model topology, and are not dependent on particular values of transition rates (besides them being always positive, or \(0\)).
Therefore, we omit time-inhomogeneity for now, and describe the average behaviour of many individuals subject to these state-to-state transitions with the equation,
\begin{equation}
    \odv{\mathbf{x}}{t} = \mathbf{Q}^\top \mathbf{x}\,,\label{eqn:governing_equation}
\end{equation}
where \(\mathbf{x}\) represents the amount of individuals in each state of the system. If \(\mathbf{x}(t)\) is interpreted as a probability vector describing the probability that a randomly-transitioning individual is in each state at time \(t\) (necessitating that these probabilities sum to \(1\)), Equation~\eqref{eqn:governing_equation} is exactly the Kolmogorov forward equation of the equivalent CTMC \citep{kelly_reversibility_1981}.
%---though in terms of column vectors instead of row vectors. 

Nevertheless, the above governing equation is a purely deterministic description of how the state of a system evolves over time \citep{keener2009mathematical}. 
In some contexts, such as ion-channel models in electrophysiology, these models are referred to as \emph{Markov models} \citep{rudy_computational_2006, fink_markov_2009, teed_computationally_2016-1}. And we use the same term throughout this article.

The number of states, the value of transition rates, and the {model topology} may all vary between models.
These properties are encoded in the transition-rate matrix, \(\mathbf{Q}\), which necessarily satisfies the following properties (where \(Q_{i, j}\) denotes the element on the \(i\)\textsubscript{th} row and \(j\)\textsuperscript{th} column of \(\mathbf{Q}\)):
\begin{align}
&\mathbf{Q} \text{ is a real-valued}\; N\times N\text{ matrix}\text{;} \label{eqn:Q_conditions_start} \\
        & Q_{i, j} \geqslant 0 \, \text{ for all } i \neq j\,;     \label{eqn:app_backwards_forwards_existence} \\
    & \mathbf{Q}\mathbf{1} = \mathbf{0} \text{ (each row of \(\mathbf{Q}\) sums to $0$). } 
    \label{eqn:app_Q_conditions_end} 
\end{align}
    %&\text{and } Q_{i, j} > 0 \,\text{ if and only if }\,  Q_{j, i} > 0\,; 
    %\label{eqn:app_Q_conditions_end}
Such a transition-rate matrix necessarily has non-positive entries along the diagonal because for any diagonal element, we have \(Q_{i, i} = - \sum_{j\neq i} Q_{i, j}\), which is negative provided there is a non-zero transition rate from state \(i\) to any other state.
The total occupancy of the system can be written as a scalar product of the states with a vector of ones, that is $ \mathbf{1}^\top \mathbf{x}$. 
We can see this total occupancy of the system, \(\mathbf{1}^\top \mathbf{x}\), is conserved because,
\begin{equation}
    \odv{}{t}\left(\mathbf{1}^\top \mathbf{x} \right)
    =\mathbf{1}^\top \odv{}{t} \mathbf{x} 
    = \mathbf{1}^\top \left(\mathbf{Q}^\top\mathbf{x}\right) 
    = \left(\mathbf{1}^\top \mathbf{Q}^\top\right) \mathbf{x}
    = \left(\mathbf{Q1} \right)^\top \mathbf{x} 
    = \mathbf 0\,,
\end{equation}
using Condition~\eqref{eqn:app_Q_conditions_end}.
For this reason, we refer to Condition~\eqref{eqn:app_Q_conditions_end} as the \emph{conservation constraint}.  

Because \(Q_{i, i} = - \sum_{j\neq i} Q_{i, j}\) we have \(\odv{x_i}{t} \geq Q_{i, i} x_i\) and so, if \(x_i(0) > 0\), we have  \(x_i(t) \geq x_i(0) e^{Q_{i, i} t} > 0\). Therefore, provided \(x_i(0) > 0\), the occupancy of each state remains positive for all \(t\).
Hence, as is common throughout the literature \citep{fink_markov_2009}, we may constrain  our initial conditions such that  \(\mathbf{1}^\top \mathbf{x}(0) = 1\). In this way, \(\mathbf{x}(t)\) may be interpreted as a vector of probabilities summing to one. 

We also require that every state in the model is reachable from every other state (though not necessarily directly via a single transition). 
Models satisfying this condition in addition to Conditions~\eqref{eqn:Q_conditions_start}--\eqref{eqn:app_Q_conditions_end}  are said to be \emph{irreducible}. Irreducible Markov models are commonly encountered in mathematical biology, and have a number of desirable properties, as discussed in Section~\ref{sec:app_steady_states}.
Examples of various models satisfying these conditions are shown in Figure~\ref{fig:example_models}.
% Such models are the primary focus of this work, where we show they have desirable properties and prove useful.

\begin{figure}[htbp]
    \centering
    \includegraphics[width=0.8\linewidth]{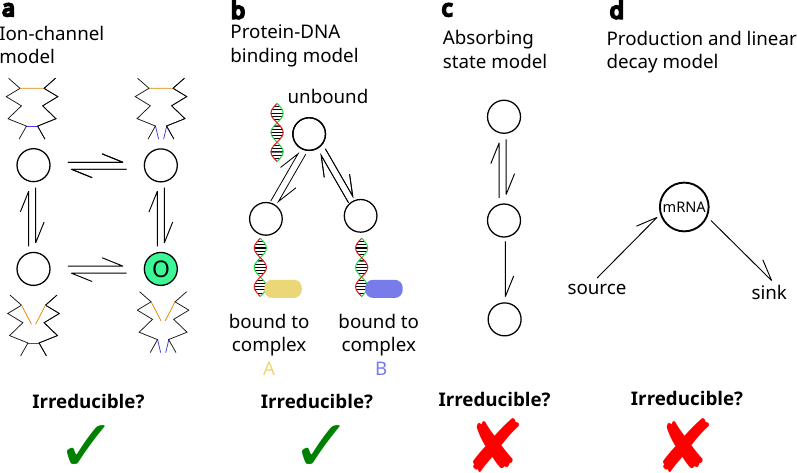}
    \caption{
    Examples of reducible and irreducible Markov models for biological applications. 
    The irreducible models shown in Panels \textbf{a} and \textbf{b} are explored in this paper. 
    The absorbing-state model in Panel \textbf{c} can be seen to be reducible, as there is no path from state \(R\) to state \(I\). 
    Whereas the model in \textbf{d}, representing the production and decay of mRNA,  contains source and sink terms, and hence cannot be irreducible. 
    }
    \label{fig:example_models}
\end{figure}

Note that the above describes models with constant transition rates, but biological models may include transition rates which vary over time.  These transition rates may depend on other variables (such as transmembrane potential in certain ion-channel models). For example, let \(V\) be some external variable, and let \(k(V)\) be a transition rate that is dependent on \(V\). Then if for all such \(k\), we have \(k(V) > 0\) for all \(V\), the existence of a non-zero transition rate between any two states is unaffected by changes to \(V\). In other words, the {model topology} is independent of \(V\).

Key results regarding the existence and uniqueness of steady states (presented in Section~\ref{sec:app_steady_states}) depend only on this model topology, and these properties are therefore independent of external variables like \(V\). 
Both the examples covered in Sections~\ref{sec:example_ion_channel} and \ref{sec:example_auxin} concern irreducible Markov models which include some dependence on time-dependent external variables. Nevertheless, despite concerning only cases where \(\mathbf{Q}\) is constant, the results discussed in Section~\ref{sec:app_steady_states} and \ref{sec:app_reversibility} prove useful for these examples.

We begin in Section~\ref{sec:app_steady_states} by connecting the properties of \(\mathbf{Q}\) with its graphical representation, and show how graph-theoretic properties ensure the existence of globally stable equilibria. Here Proposition~\ref{thm:unique_global_equilibrium} shows that such equilibria are guaranteed to exist provided \(\mathbf{Q}\) is irreducible, and/or equivalent graphical conditions are met.

In Section~\ref{sec:app_reversibility}, we discuss an additional property which is often imposed on Markov models---\emph{microscopic reversibility}. 
We then relate this property to the qualitative behaviour of the related governing equation (Equation~\eqref{eqn:governing_equation}) via the spectrum of the transition-rate matrix, \(\mathbf{Q}\).

In Section~\ref{sec:computational}, we discuss some computational consideration for Markov models in the form of Equation~\eqref{eqn:governing_equation}. 
Here, we derive a method which uses the conservation constraint to reduce the \(N\times N\) ODE-system (Equation~\eqref{eqn:governing_equation}) to an \((N-1)\times(N-1)\) system.
This model simplification can significantly reduce the computation required to simulate a Markov model.

Finally, in Section~\ref{sec:examples}  we demonstrate these approaches through two numerical examples: the first is taken from the study of macroscopic ion-channel kinetics in cardiac electrophysiology \citep{fink_markov_2009}; the second demonstrates how our results may be used to simplify cellular signalling pathways models which contain some subnetwork which satisfies the above properties.
For the first example, we show how the previously introduced results allow more efficient simulation.
%---which is particularly useful for model fitting.
For the second, we use these results to make a quasi steady-state assumption which greatly simplifies simulation and  analysis of the resultant dynamical system. 
We suggest that this material will prove useful more widely for the computational implementation of similar models in any relevant application area.

\section{Steady states of Markov models} \label{sec:app_steady_states}
One particularly important aspect of ODE-based models is their long-term behaviour \citep{hirsh_differential_2013}.  Determining the existence and uniqueness of our models' steady states, is often relied upon for parameter fitting \citep{fink_markov_2009, shuttleworth_evaluating_2025}.
In this section, we show that irreducibility (as described above) guarantees stability and the uniqueness of steady states of Equation~\eqref{eqn:governing_equation}. We begin by introducing some useful graph theory terminology to more precisely describe the model topology.

\subsection{Connectedness and Irreducibility}  \label{sec:connectedness}
The transition-rate matrix, \(\mathbf{Q}\), can be associated with a directed graph or \emph{digraph} \citep{adrian_bondy_graph_2008} having a directed edge (connection) between states \(i\) and \(j\) with weight \(Q_{i, j}\) where \(Q_{i, j} \neq 0\). 
Note that a weighted digraph may also be called a \emph{network} \citep{kelly_reversibility_1981}.

Let \(D\) be a digraph and let \(H \leqslant D\) be a subdigraph, that is, a digraph such that each node in \(H\) is also in \(D\) and each edge in \(H\) is also present in \(D\) \citep{adrian_bondy_graph_2008}. 
Then, \(H \leqslant D\) is said to be \emph{strongly connected} if, for any two nodes, \(u, v \in H\) there is a walk from \(u\) to \(v\). For example the models shown in Figure~\ref{fig:example_models}(\textbf{a}--\textbf{b}) are strongly and weakly connected; whereas the model shown in Figure~\ref{fig:example_models}(\textbf{c}) is weakly connected, but not strongly connected.
In this way, a digraph can be partitioned into \emph{strongly connected components}, each of which is itself strongly connected. 
Equivalently, a digraph has exactly one strongly connected component if and only if it is strongly connected.  For the Markov models discussed in this paper, it is simple to check that a given graph is connected either visually or by using an algorithmic approach such as a depth-first search \citep{adrian_bondy_graph_2008}---this is particularly simple for the Markov models discussed in this paper which have at most \(6\) states.  

A matrix, say \(\mathbf{A}\), is said to be irreducible if there exists no \emph{permutation} matrix \(\mathbf{M}\) (consisting only of \(0\)s and exactly one \(1\) in each row and column)  such that \(\mathbf{MAM}^{-1}\) is upper-triangular \citep{BermanAbraham1979Nmit}.  For Markov models in the form of Equation~\eqref{eqn:governing_equation}, it is known that \(\mathbf{Q}\) is irreducible if and only if the digraph \(D\) is strongly connected \citep{BermanAbraham1979Nmit}.
As we will see in the following section, this link between these properties of the model topology (characterised by \(D\)), and the properties of the transition-rate matrix \(\mathbf{Q}\) prove useful in characterising the behaviour of Markov models. 
Principally, we explain how these graph-theoretic properties are related to the large-time behaviour of Equation~\eqref{eqn:governing_equation}---and in particular, the
\emph{equilibrium point(s)} of Equation~\eqref{eqn:governing_equation}, that is where \(\odv{\mathbf{x}}{t} = \mathbf{Q}^\top \mathbf{x} = 0\). 
Our main focus in this section is the following  result, which closely follows from standard material pertaining to the class of nonnegative matrices \citep{BermanAbraham1979Nmit}.

\begin{proposition} 
\label{thm:unique_global_equilibrium}
Let \(\mathbf{Q}\) be an irreducible matrix satisfying conditions~(\ref{eqn:Q_conditions_start})--(\ref{eqn:app_Q_conditions_end}) and consider the ODE system defined by Equation~\eqref{eqn:governing_equation}, then there is a vector \(\mathbf{x}_\infty \neq \mathbf{0}\) such  that,
\begin{enumerate}[label=(\roman*)]
\item \(\mathbf{Q}^\top \mathbf{x}_\infty = 0\),
\item \(\mathbf{x}_\infty\) is unique up to a scalar,
\item provided \(\mathbf{1}^\top\mathbf{x}(0) = 1\), we have  \(\mathbf{x}(t) \rightarrow \mathbf{x}_\infty \)  as \(t \rightarrow \infty\).
\end{enumerate}
\end{proposition}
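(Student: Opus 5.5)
The plan is to reduce everything to the Perron--Frobenius theorem for nonnegative irreducible matrices \citep{BermanAbraham1979Nmit} by a diagonal shift. Choose a constant \(c > \max_i |Q_{i,i}|\) and set \(\mathbf{A} = \mathbf{Q}^\top + c\mathbf{I}\). By Condition~\eqref{eqn:app_backwards_forwards_existence} and the choice of \(c\), \(\mathbf{A}\) is entrywise nonnegative. Its off-diagonal zero pattern is that of \(\mathbf{Q}^\top\), which is irreducible because \(\mathbf{Q}\) is (transposing only reverses the digraph, and reversal preserves strong connectedness). The conservation constraint~\eqref{eqn:app_Q_conditions_end} gives \(\mathbf{1}^\top \mathbf{A} = c\,\mathbf{1}^\top\). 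So \(\mathbf{1}\) is a strictly positive left eigenvector of \(\mathbf{A}\) with eigenvalue \(c\).

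For a nonnegative irreducible matrix, the only eigenvalue admitting a positive eigenvector is the spectral radius. Hence \(\rho(\mathbf{A}) = c\). Perron--Frobenius then says that \(c\) is an algebraically simple eigenvalue of \(\mathbf{A}\), with a right eigenvector \(\mathbf{x}_\infty\) that has strictly positive entries. Translating back, \(\mathbf{Q}^\top \mathbf{x}_\infty = \mathbf{A}\mathbf{x}_\infty - c\mathbf{x}_\infty = \mathbf{0}\), which gives (i). The simplicity of \(c\) means \(0\) is a simple eigenvalue of \(\mathbf{Q}^\top\), so \(\ker \mathbf{Q}^\top\) is one-dimensional, which gives (ii). I would normalise \(\mathbf{x}_\infty\) so that \(\mathbf{1}^\top\mathbf{x}_\infty = 1\), which is possible since its entries are positive. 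This normalisation is implicitly needed for (iii) to make sense.

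For (iii), the key step is a spectral gap. Every eigenvalue of \(\mathbf{Q}^\top\) has the form \(\lambda - c\) with \(|\lambda| \leqslant c\), so its real part is at most \(0\). Equality \(\Re(\lambda) = c\) together with \(|\lambda| \leqslant c\) forces \(\lambda = c\). Hence every eigenvalue other than the simple eigenvalue \(0\) has strictly negative real part. Because the argument only uses \(|\lambda| \leqslant c\), no primitivity or aperiodicity assumption is needed. This step is where I expect the only real subtlety, and the shift trick handles it.

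Next I split \(\mathbb{R}^N = \ker\mathbf{Q}^\top \oplus \operatorname{range}\mathbf{Q}^\top\), which is valid because \(0\) is a simple eigenvalue. The conservation constraint gives \(\operatorname{range}\mathbf{Q}^\top \subseteq \{\mathbf{y} : \mathbf{1}^\top\mathbf{y} = 0\}\), and both spaces have dimension \(N-1\), so they are equal. Write \(\mathbf{x}(0) = (\mathbf{1}^\top\mathbf{x}(0))\,\mathbf{x}_\infty + \mathbf{y}\) with \(\mathbf{1}^\top\mathbf{y} = 0\). Then
\[
\mathbf{x}(t) = \mathbf{x}_\infty + e^{\mathbf{Q}^\top t}\mathbf{y}.
\]
The subspace \(\operatorname{range}\mathbf{Q}^\top\) is invariant under \(\mathbf{Q}^\top\), and the restriction has all eigenvalues with real part at most \(-\alpha < 0\). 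A Jordan-form bound of the type \(\|e^{\mathbf{Q}^\top t}\mathbf{y}\| \leqslant C t^{N} e^{-\alpha t}\|\mathbf{y}\|\) then shows that the second term decays to zero. Therefore \(\mathbf{x}(t) \to \mathbf{x}_\infty\).
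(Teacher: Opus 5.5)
Your proof is correct and follows the same route as the paper: shift by a multiple of the identity, apply Perron--Frobenius to the resulting nonnegative irreducible matrix, and deduce the spectral gap from $|\mu+k|\leqslant k$. The differences are in your favour. Shifting $\mathbf{Q}^\top$ rather than $\mathbf{Q}$ makes your positive Perron vector genuinely a null vector of $\mathbf{Q}^\top$, whereas the Perron vector of the paper's $\mathbf{Q}+k\mathbf{I}$ is simply $\mathbf{1}$. You get $\rho(\mathbf{A})=c$ from the positive left eigenvector $\mathbf{1}$ rather than from Proposition~\ref{prop:eigenvalues_lessthan_zero}. And you use the algebraic simplicity of $c$ together with the splitting $\mathbb{R}^N=\ker\mathbf{Q}^\top\oplus\{\mathbf{y}:\mathbf{1}^\top\mathbf{y}=0\}$ to actually carry out the convergence in (iii), which the paper only asserts once the spectral gap is in hand.
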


In other words, if the  digraph of our Markov model is {strongly connected},  there exists a unique stable equilibrium point. Note that in the case where each reaction in the model, \(i \rightarrow j\) with \({Q}_{i, j} > 0\), has a corresponding reverse reaction (that is, \(Q_{j, i} > 0\)), we need only consider the connectedness of the (undirected) graph representation of the model. 

We first focus on part \textit{(i)}, the existence of an equilibrium point for Equation~\eqref{eqn:governing_equation}. 
The proofs of parts \textit{(ii)} and \textit{(iii)} are deferred to Section~\ref{sec:perron}, following a discussion of the necessary theory. 
We begin by considering the eigenvalues of \(\mathbf{Q}^\top\). 
Firstly, we see that the condition that \(\mathbf{Q} \mathbf{1} = \mathbf{0}\)
means precisely that \(0\) is an eigenvalue of \(\mathbf{Q}\) with \(\mathbf{1}\) being the corresponding eigenvector.
Hence, we see that,
\[
\mathbf{0}^\top = (\mathbf{Q}\mathbf{1})^\top 
= \mathbf{1}^\top \mathbf{Q}^\top,
\]
and so \(\mathbf{1}^\top\) is a left-eigenvector of \(\mathbf{Q}^\top\) with eigenvalue \(0\). 
The characteristic equation is not changed by transposition, so  \(\textrm{det}(\mathbf{Q} - \lambda \mathbf{I}) = \textrm{det}(\mathbf{Q}^\top - \lambda \mathbf{I})\) for any \(\lambda\). Hence, setting \(\lambda=0\), we see that there is a corresponding right eigenvector, \(\mathbf{v}\) with \(\mathbf{Q}^\top \mathbf{v} = 0\), which means that \(\mathbf{v}\) is an equilibrium point of the governing equation (Equation~\eqref{eqn:governing_equation}), hence,
\begin{equation}
  \odv{\mathbf{x}}{t} =  \mathbf{Q}^\top \mathbf{v} = \mathbf{0}. \label{eqn:eqm_point}
\end{equation} 

%Note, however, that this equilibrium point is not unique. 
Note that for any \(\alpha \in \mathbb{R}\), we also have \(\mathbf{Q}^\top (\alpha \mathbf{v}) = \alpha \mathbf{Q}^\top\mathbf{v} = 0\), but
the only equilibrium point on this line satisfying the constraint \(\mathbf{1}^\top \mathbf{v} = 1\) is\(\dfrac{\mathbf{v}}{\mathbf{1}^\top\mathbf{v}}\).  Hence, we have proved part \textit{(i)} of Proposition~\ref{thm:unique_global_equilibrium}.
In Section~\ref{sec:perron}, we provide proofs of parts \textit{(ii)} and \textit{(iii)} which concern the uniqueness and stability of  the equilibrium \(\mathbf{x}_\infty\)

The following result concerns the properties of the non-zero eigenvalues. 
These results allow us to assess the stability of the equilibria described above as when  an eigenvalue of \(\mathbf{Q}^\top\), say \(\lambda\) has  \(\textrm{Re}(\lambda) >0\), the system is unstable \citep{hirsh_differential_2013}.
Later, in Section~\ref{sec:perron}, we use this result to prove parts \textit{(ii)} and \textit{(iii)} of Proposition~\ref{thm:unique_global_equilibrium}.
\begin{proposition} \label{prop:eigenvalues_lessthan_zero}
    Let \(\mathbf{Q}\) be a matrix satisfying conditions (\ref{eqn:Q_conditions_start}--\ref{eqn:app_Q_conditions_end}). 
    Then, for any eigenvalue of \(\mathbf{Q}\), \(\lambda\), we have \(\Re(\lambda) \leq 0\).
\end{proposition}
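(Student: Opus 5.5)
The plan is to apply the Gershgorin circle theorem to \(\mathbf{Q}\), using only Conditions~\eqref{eqn:Q_conditions_start}--\eqref{eqn:app_Q_conditions_end}. Irreducibility is not needed here. Every eigenvalue \(\lambda\) of \(\mathbf{Q}\) lies in at least one disc
\[
D_i = \Big\{ z \in \mathbb{C} : |z - Q_{i,i}| \leq \sum_{j \neq i} |Q_{i,j}| \Big\}.
\]
For completeness I will prove this directly rather than cite it. Take an eigenvector \(\mathbf{u}\) with \(\mathbf{Q}\mathbf{u} = \lambda \mathbf{u}\), and choose an index \(i\) with \(|u_i| = \max_k |u_k| > 0\). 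The \(i\)\textsuperscript{th} row gives \((\lambda - Q_{i,i}) u_i = \sum_{j\neq i} Q_{i,j} u_j\). Taking moduli and dividing by \(|u_i|\) yields \(|\lambda - Q_{i,i}| \leq \sum_{j \neq i} |Q_{i,j}|\).

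Next, I will use the structure of \(\mathbf{Q}\) to simplify the radius. By Condition~\eqref{eqn:app_backwards_forwards_existence}, \(|Q_{i,j}| = Q_{i,j}\) for \(j \neq i\). By the conservation constraint~\eqref{eqn:app_Q_conditions_end}, \(\sum_{j\neq i} Q_{i,j} = -Q_{i,i} =: r_i \geq 0\). So each disc \(D_i\) is centred at the non-positive real number \(-r_i\) and has radius \(r_i\). Any \(z \in D_i\) then satisfies
\[
\Re(z) \leq -r_i + |z + r_i| \leq -r_i + r_i = 0.
\]
Hence \(\Re(\lambda) \leq 0\). The degenerate case \(r_i = 0\) causes no difficulty, since the disc collapses to the point \(0\).

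Finally, the same bound holds for the spectrum of \(\mathbf{Q}^\top\), which is what matters for Equation~\eqref{eqn:governing_equation}. As already noted in the text, \(\det(\mathbf{Q} - \lambda\mathbf{I}) = \det(\mathbf{Q}^\top - \lambda \mathbf{I})\), so the two matrices have the same eigenvalues. There is no real obstacle in this argument. The only step needing care is choosing the row of maximal eigenvector modulus, so that the division is legitimate and the row-sum condition (rather than a column-sum condition) is the one used. This is why the argument is applied to \(\mathbf{Q}\) rather than directly to \(\mathbf{Q}^\top\).
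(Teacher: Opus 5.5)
Your proof is correct and follows essentially the same route as the paper. The paper invokes weak diagonal dominance with non-positive diagonal entries and cites Gershgorin's disc theorem (Horn and Johnson, Chapter~6.1), then transfers the bound to \(\mathbf{Q}^\top\) via the equal characteristic polynomials; you prove the Gershgorin step inline and make the disc geometry \(\Re(z) \leq -r_i + |z + r_i| \leq 0\) explicit.
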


\textit{Proof of Proposition~\ref{prop:eigenvalues_lessthan_zero}:} by Conditions  (\ref{eqn:Q_conditions_start})--(\ref{eqn:app_Q_conditions_end}), we have
\begin{equation}
    |Q_{i, i}|  = \left| \sum_{j\neq i} Q_{i, j} \right| = \sum_{j\neq i}\left|Q_{i,j}\right|,
\end{equation} 
because \(Q_{i, j} \geqslant 0\) for all \(i, j\) with \(i \neq j\). 
In other words, \(\mathbf{Q}\) is \emph{(weakly) diagonally dominant}. Because the diagonal elements are negative, weak diagonal dominance implies that the eigenvalues of \(\mathbf{Q}\) have non-positive real part \citep[Chapter~6.1]{Horn_Johnson_2012}.
Then, because the eigenvalues of \(\mathbf{Q}\) and \(\mathbf{Q}^\top\) are identical, all non-zero eigenvalues of  \(\mathbf{Q^\top}\) have non-positive real part---that is, \(\Re(\lambda) \leq 0\) for each eigenvalue, \(\lambda\). 

Next, we demonstrate these eigenvalue properties for two example model topologies, the first, is a strongly connected model where \(\mathbf{Q}\) is irreducible, and the second, an example of a Markov model with a reducible transition-rate matrix, for which there is no unique equilibrium.

\subsubsection{Example: a disconnected Markov model}
Consider the four-state Markov model, \begin{equation}
   \mathbf Q = \begin{pmatrix}
    -k_1 & k_1 & 0 & 0 \\
    k_2 & -k_2 & 0 & 0 \\
    0&  0 & - k_3 & k_3 \\
    0 & 0 & k_4 & -k_4
    \end{pmatrix}.
    \end{equation}
We can see from the corresponding graph (Figure~\ref{fig:example_graphs}(\textbf{b})) that this Markov model is disconnected. 
Alternatively, we can see that \(\mathbf{Q}\) is reducible because it is already in block upper-triangular form. 
In computing the eigenvalues of \(\mathbf{Q}\), we see that it has two eigenvectors corresponding to the zero eigenvalue. 
The two connected components behave independently (as shown in Figure~\ref{fig:example_graphs}), and so we have steady states of the form \((v_1, v_2, 0, 0)^\top\) and \((0, 0, v_3, v_4)^\top\) where \(v_1, v_2, v_3, v_4 \neq 0\).

Such models may be treated as two separate ODE systems, representing the first two and final two states, respectively, with each individually satisfying \(\mathbf{1}^\top \mathbf{x} = 1\).
Proposition~\ref{thm:unique_global_equilibrium} can then be applied to each component individually to see that the combined system has a unique equilibrium
Such an approach is often taken in the computation of cardiac ion-channel models
\citep{keener2009mathematical, beattie_sinusoidal_2018, shuttleworth_evaluating_2025}.
So-called Hodgkin-Huxley ion-channel models may take this form, where multiple independent open-closed components (referred to as subunits) are combined to give a single output (the proportion of open channels) \citep{keener2009mathematical, fink_markov_2009}.

For typically-sized models, it is easy to visually check that a given transition-rate matrix is strongly connected.
Alternatively, it is possible check connectedness programmatically by using an algorithm such as Depth-First Search \citep{gibbons_algorithmic_1985}.

\subsubsection{Example: a connected Markov model}
Consider the three-state Markov model shown as a weighted digraph in Figure~\ref{fig:example_graphs}(\textbf{a}). 

\begin{figure}[htbp]
    \centering
    \includegraphics{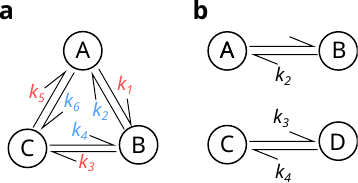}
    \caption{\textbf{a}: a three-state Markov model. 
    This graph is strongly connected as each state is accessible from every other state, so the corresponding transition rate matrix must be irreducible, and the corresponding system has a unique equilibrium point (under the constraint \(\mathbf{1}^\top\mathbf{x}(0) = 1\)). 
    The transition rates oriented in the clockwise direction (red) have odd indices (\(k_1, k_3, k_5\)). Whereas the transition rates pointing in the anti-clockwise direction (blue) have even indices (\(k_2, k_4, k_6\)). \textbf{b}: a four-state Markov model with a disconnected graph. Since the graph is disconnected, the corresponding transition-rate matrix is reducible. 
    The two connected components (A and B, and C and D) may be seen as wholly independent processes, each with a unique stable equilibrium.
    }
    \label{fig:example_graphs}
\end{figure}

The transition rate matrix for $\boldsymbol{x}=(A, B, C)^\top$ is,
\begin{equation}
   \mathbf Q = \begin{pmatrix}
    -k_1-k_6 & k_1 & k_6 \\
    k_2 & -k_2 - k_3 & k_3 \\
    k_5 & k_4 & - k_4 - k_5
    \end{pmatrix}. \label{eqn:3-state-example1}
\end{equation} 

We can see that \(\mathbf{Q}\) is irreducible from Figure~\ref{fig:example_graphs} as the corresponding digraph is strongly connected. 
As discussed above, this implies there is exactly one eigenvector of \(\mathbf{Q}^\top\) corresponding to a vanishing eigenvalue. 

We may compute the eigenvalues of Equation~\eqref{eqn:3-state-example1} directly.
The characteristic polynomial of \(\mathbf{Q}^\top\) is,
\begin{align}
    &\textrm{det}(\mathbf{Q}^\top - \lambda \mathbf{I})\\
    &= \lambda \big(-k_1 k_3 - k_1 k_4 - k_2 k_4 - k_1 k_5 - k_2 k_5 - k_3 k_5 - k_2 k_6 - k_3 k_6 - k_4 k_6 \nonumber \\& \hspace{4em}- k_1 \lambda - k_2 \lambda - k_3 \lambda - k_4 \lambda - k_5 \lambda - k_6 \lambda - \lambda^2 \big). \nonumber
\end{align}
Hence, \(0\) is an eigenvalue of multiplicity \(1\), meaning that there is an eigenvector \(\mathbf{v}\) satisfying \(\mathbf{Q}^\top \mathbf{v} = \mathbf 0\) which is unique up to a scalar. That is, any \(\mathbf{u}\) with \(\mathbf{Q}^\top\mathbf{u} = 0\), may be expressed as a scalar multiple of \(\mathbf{v}\), say, \(\mathbf{u} = \alpha \mathbf{v}\) for some \(\alpha\in\mathbb{R}\).
Therefore, if we include the constraint that \(\mathbf{1}^\top\mathbf{x} = 1\), the governing equation has exactly one equilibrium point: \(\dfrac{\mathbf{v}}{\mathbf{1}^\top\mathbf{v}}\).

\subsubsection{A Markov model with complex eigenvalues}
We conclude by demonstrating the results introduced in this section by computing the eigenvalues of \(\mathbf{Q}^\top\) for a three-state model as shown in Figure~\ref{fig:example_graphs}. 
We set transition rates as, 
\begin{equation}
        (k_1,\, k_2,\, k_3,\, k_4,\, k_5,\, k_6) =     
     (1,\, 2,\, 3,\, 4,\, 5,\, 10).
\end{equation} 
The resulting transition rate matrix is,
\begin{equation}
    \mathbf{Q} = \begin{pmatrix}
        -11 & 1 & 10 \\ 
        2 & -5 & 3 \\
        5 & 4 & -9
    \end{pmatrix}.
\end{equation}
To one decimal place, the right-eigenpairs of \(\mathbf{Q}^\top\) are,
\begin{align*}
    &\lambda_1 = -17.1,  & \mathbf{v}_1 &= (-0.7, -0.3, 1)^\top , \\
    &\lambda_2  =-7.9,  & \mathbf v_2 &= (0.6, -1.6, 1)^\top , \\ \text{and }\,
    &\lambda_3 = 0, & \mathbf v_3 &= (0.6, 0.9, 1)^\top.
\end{align*}

However, if we choose,
\begin{equation}
(
    k_1,
    k_2,
    k_3,
    k_4, 
    k_5, 
    k_6 
)  =
(
   1,
    100,
    1,
    100, 
    1,
    100
).
\end{equation} the right-eigenpairs of \(\mathbf Q^\top\) are (to one decimal place), 
%https://www.wolframalpha.com/input?i=%7B%7B-k1-k6%2C+k1%2C+k6%7D%2C+%7Bk2%2C+-k2-k3%2C+k3%7D%2C+%7Bk5%2C+k4%2C+-k4+-k5%7D%7D+with+%5Bk1%2C+k2%2C+k3%2C+k4%2C+k5%2C+k6%5D+%3D+%5B1%2C+100%2C+1%2C+100%2C+1%2C+100%5D
\begin{align*}
    &\lambda_1 \approx -151.5 + 85.7i , &v_1 &\approx (-0.5 + 0.9i, -0.5 - 0.9i, 1.0), \\
     &\lambda_2 \approx -151.5 - 85.7i, &v_2 &\approx(-0.5 - 0.9i, -0.5 + 0.9i,  1.0),\\
     &\lambda_3 \approx 0.0, &v_3 &\approx (1.0, 1.0, 1.0)\,.
\end{align*}

This example demonstrates that the eigenvalues of \(\mathbf{Q}\) may be non-real for certain choices of \(\mathbf{Q}\), even though the solution itself is real-valued. 
This fact should be considered when a computational eigendecomposition method \citep{moler2003nineteen} is used to simulate such a model (that is, to solve Equation~\eqref{eqn:governing_equation}). 
However, the computational implementation of other matrix-exponential solution methods does not require complex variables in computer code (since solutions and the transition-rate matrix itself contain only real values).
Examples of such methods include Pad\'e approximation \cite{moler2003nineteen}, as used in \cite{teed_computationally_2016-1}, and the use of numerical ODE solvers, as presented in Section~\ref{sec:examples}.
Also, as discussed in Section~\ref{sec:app_reversibility},  non-real eigenvalues do not arise in Markov models which satisfy microscopic reversibility---an additional criterion often imposed on the transition-rate matrix, \(\mathbf{Q}\).

\subsection{The Perron-Frobenius Theorem for Markov models} \label{sec:perron}

The Perron-Frobenius theorem is a collection of powerful results arising from the study of strictly positive matrices, that is, those matrices \(\mathbf{A}\) with the property, \(A_{i,j} > 0\) for all \(i, j\) \citep{BermanAbraham1979Nmit, meyer_matrix_2000}.
However, there are certain extensions of Perron-Frobenius theory to certain classes of nonnegative matrices, such as the theorem presented below.
The theorem presented below is a simplified version of the result presented by \citet{meyer_matrix_2000} concerning nonnegative, irreducible matrices.
However, stronger results hold for strictly positive matrices
\citep{meyer_matrix_2000, seneta06, BermanAbraham1979Nmit}. 

Note that a transition-rate matrix, \(\mathbf{Q}\) satisfies Conditions~\eqref{eqn:Q_conditions_start}--\eqref{eqn:app_Q_conditions_end}  necessarily contains negative elements, and so these Perron-Frobenius results cannot be applied directly to \(\mathbf{Q}\). 
Instead, we prove Proposition~\ref{thm:unique_global_equilibrium} by applying this result to a closely-related nonnegative matrix.

\begin{theorem}[Perron-Frobenius Theorem for nonnegative irreducible matrices]\citep[Chapter~8]{meyer_matrix_2000}
\label{thm:perron_frobenius}
Let \(\mathbf{M} = (M_{i,j})\) be an irreducible \(N \times N\) matrix with nonnegative entries. Then there exists an eigenpair of \(\mathbf{M}\), \((r,\, \mathbf{v}^+)\) such that:
\begin{itemize}
    \item  \(r \geqslant |\lambda|\) for all eigenvalues \(\lambda\);
    \item the components of \(\mathbf{v}^+\), \(v_i^+\), satisfy \(v_i^+ > 0\) for all \(i\);
    \item and, excluding scalar multiples, \(\mathbf{v}^+\) is the only eigenvector of \(\mathbf{M}\) satisfying \(\mathbf{M}\mathbf{v}^+ = r\,\).
    \end{itemize}
\end{theorem}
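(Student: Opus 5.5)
The statement just given is the cited Perron--Frobenius theorem for nonnegative irreducible matrices, so the paper will take it from \citet{meyer_matrix_2000}. If a self-contained argument were wanted, I would reduce to the strictly positive case through the matrix \(\mathbf{P} = (\mathbf{I} + \mathbf{M})^{N-1}\). Irreducibility of \(\mathbf{M}\) means that for every pair \(i,j\) there is a walk of length at most \(N-1\) from \(i\) to \(j\) in the digraph of \(\mathbf{M}\). Expanding the binomial, the \((i,j)\) entry of \(\mathbf{P}\) therefore contains a positive term, so \(\mathbf{P}\) is strictly positive. Moreover, \(\mathbf{P}\) has the same eigenvectors as \(\mathbf{M}\), and each eigenvalue \(\lambda\) of \(\mathbf{M}\) corresponds to the eigenvalue \((1+\lambda)^{N-1}\) of \(\mathbf{P}\).

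Existence of the positive eigenvector comes from the Collatz--Wielandt variational characterisation. Set
\[
r = \sup_{\mathbf{x} \geq 0,\, \mathbf{x}\neq \mathbf{0}} \; \min_{i:\, x_i>0} \frac{(\mathbf{M}\mathbf{x})_i}{x_i}.
\]
After normalising \(\mathbf{1}^\top \mathbf{x} = 1\), this supremum is attained on the compact simplex. The function being maximised is only upper semicontinuous there, so I would apply compactness to the transformed set \(\mathbf{P}(\text{simplex})\), which contains only positive vectors.

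Let \(\mathbf{z}\) be a maximiser, so that \(\mathbf{M}\mathbf{z} - r\mathbf{z} \geq 0\). If this vector were nonzero, multiplying by \(\mathbf{P}\) would make \(\mathbf{M}(\mathbf{P}\mathbf{z}) - r\,\mathbf{P}\mathbf{z}\) strictly positive, contradicting the maximality of \(r\). Hence \(\mathbf{M}\mathbf{z} = r\mathbf{z}\). Then \(\mathbf{P}\mathbf{z} = (1+r)^{N-1}\mathbf{z}\) with \(\mathbf{P}\mathbf{z} > 0\), so \(\mathbf{v}^+ = \mathbf{z}\) is strictly positive.

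To show \(r \geq |\lambda|\), take \(\mathbf{M}\mathbf{y} = \lambda \mathbf{y}\) and apply the triangle inequality to get \(\mathbf{M}|\mathbf{y}| \geq |\lambda|\,|\mathbf{y}|\). Since \(|\mathbf{y}|\) is admissible in the variational problem, \(|\lambda| \leq r\).

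Uniqueness up to scaling is the step I expect to be the main obstacle. Suppose \(\mathbf{M}\mathbf{w} = r\mathbf{w}\). We may take \(\mathbf{w}\) real, since the real and imaginary parts are each eigenvectors for the real eigenvalue \(r\). Choose \(t\) so that \(\mathbf{u} = \mathbf{v}^+ - t\mathbf{w}\) is nonnegative with at least one zero entry. If \(\mathbf{u} \neq \mathbf{0}\), it is itself an eigenvector for \(r\), so \(\mathbf{P}\mathbf{u} = (1+r)^{N-1}\mathbf{u}\). But \(\mathbf{P}\) is strictly positive and \(\mathbf{u} \geq 0\) is nonzero, so \(\mathbf{P}\mathbf{u} > 0\), contradicting the zero entry of \(\mathbf{u}\). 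Hence \(\mathbf{u} = \mathbf{0}\) and \(\mathbf{w}\) is a multiple of \(\mathbf{v}^+\). The care needed is in choosing \(t\) correctly: if \(\mathbf{w}\) has mixed signs, first replace it by \(-\mathbf{w}\) or use \(|\mathbf{w}|\) so that such a \(t\) exists. In the paper itself, I would simply cite the result.
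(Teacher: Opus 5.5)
The paper gives no proof of this theorem. It quotes the result from Chapter~8 of \citet{meyer_matrix_2000} and uses it as a black box, so your decision to cite it matches the paper.

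Your self-contained sketch is nonetheless correct. It is the classical Wielandt argument: you define \(r\) variationally through the Collatz--Wielandt function \(f(\mathbf{x})=\min_{x_i>0}(\mathbf{M}\mathbf{x})_i/x_i\), and you use \(\mathbf{P}=(\mathbf{I}+\mathbf{M})^{N-1}>0\) twice. The first use shows that a maximiser is a strictly positive eigenvector; the second forces geometric simplicity.

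The cited chapter goes the other way round. It fixes \(r=\rho(\mathbf{M})\) from the outset and proves Perron's theorem for strictly positive matrices by a spectral-radius argument. It then transfers this to the irreducible case through the same matrix \(\mathbf{P}\), obtaining the Collatz--Wielandt formula as a corollary.

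Your route is more elementary: it needs only compactness and the positivity of \(\mathbf{P}\), and \(r\geqslant|\lambda|\) is a one-line triangle inequality. It proves exactly the three stated bullets. The citation buys the fuller textbook package, for example algebraic simplicity of \(r\). You would get that only with one more step: apply your result to \(\mathbf{M}^\top\) and pair the positive left eigenvector with a putative generalised eigenvector.

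There are three small corrections.
\begin{enumerate}
\item \(\mathbf{P}\) need not have the same eigenvectors as \(\mathbf{M}\), since distinct \(\lambda\) can give equal \((1+\lambda)^{N-1}\) when \(N\geqslant 3\). Only the direction ``eigenvector of \(\mathbf{M}\) \(\Rightarrow\) eigenvector of \(\mathbf{P}\)'' holds, but that is all you use.
\item Upper semicontinuity is no obstacle to attaining the supremum. The set \(\{\mathbf{x}\in\Delta : f(\mathbf{x})\geqslant c\}=\{\mathbf{x}\in\Delta : \mathbf{M}\mathbf{x}\geqslant c\,\mathbf{x}\}\) is closed, and an upper semicontinuous function attains its maximum on the compact simplex \(\Delta\). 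If you keep the detour through \(\mathbf{P}(\Delta)\), you must add the inequality \(f(\mathbf{P}\mathbf{x})\geqslant f(\mathbf{x})\), which follows from \(\mathbf{P}\geqslant 0\) and \(\mathbf{P}\mathbf{M}=\mathbf{M}\mathbf{P}\). Without it you do not know the two suprema agree.
\item The uniqueness step needs no special care for mixed signs. Whenever the real vector \(\mathbf{w}\) has a positive entry, \(t=\min_{w_i>0}v_i^+/w_i\) makes \(\mathbf{v}^+-t\mathbf{w}\) nonnegative with a zero entry. Only when \(\mathbf{w}\leqslant\mathbf{0}\) do you replace \(\mathbf{w}\) by \(-\mathbf{w}\). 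Switching to \(|\mathbf{w}|\) would instead leave you needing a separate argument to get back from \(|\mathbf{w}|\) to \(\mathbf{w}\).
\end{enumerate}
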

Such an eigenvalue, \(r\), is referred to as the \emph{leading eigenvalue} \citep{BermanAbraham1979Nmit}.

% Below argument is briefly covered on page 153 of B&P nonnegative matrices, talking about semi convergence and "M" matrices
% -Q fits the definition of an M matrix (no positive off-diagonal elements and nonnegative eigenvalues) - easy to see because of diagonal dominance
% Maybe confusing to use M here, because it isn't an 'M matrix'
% Maybe there's a couple too many logic leaps in here to cite it directly
Whilst transition-rate matrices of the type introduced in Section~\ref{sec:introduction} do not satisfy this condition, we can see that,
\(\mathbf{A} = \mathbf{Q} + k\mathbf{I}\) does for \(k = \max_{i}\{-Q_{i, i}\}\). 
Moreover, noting that the graphs associated with \(\mathbf{A}\) and \(\mathbf{Q}\) are identical, we see that \(\mathbf{A}\) is also irreducible, and so we can apply Theorem~\ref{thm:perron_frobenius} to \(\mathbf{A}\).
%See Corollary 2.28 in B&P

Consider the characteristic polynomial of \(\mathbf{A}\),
\begin{align}
    \det(\mathbf{A} - \lambda \mathbf{I}) =&\det(\mathbf{Q} + k\mathbf{I} - \lambda \mathbf{I})\,,
\end{align}
from which we can see a simple relationship between the eigenvalues of \(\mathbf{A}\) and \(\mathbf{Q}\) given by the one-to-one mapping between eigenvalues:
if \(\lambda\) is an eigenvalue of \(\mathbf{A}\), then \(\lambda -k\) is an eigenvalue of \(\mathbf{Q}^\top\), and the corresponding eigenvectors are equal. 
 Note also that for any \(i, j\) with \(i \neq j\), we have that \(A_{i, j} > 0 \) if and only if \(Q_{i, j} > 0\). Hence, if \(\mathbf{Q}\) is irreducible, \(\mathbf{A}\) is too.

We know from Proposition~\ref{prop:eigenvalues_lessthan_zero} that all eigenvalues of \(\mathbf{Q}^\top\) satisfy \(\Re(\mu) \leqslant 0 \), and that \(0\) is one of these eigenvalues. Hence, \(k\) must be the leading eigenvalue of \(\mathbf{A}\). 
Moreover, applying the Perron-Frobenius theorem to \(\mathbf{A}\) tells us that for each eigenvalue of \(\mathbf{A}\), \(\lambda\),  satisfies \begin{equation}
    |\lambda | \leqslant k,
\end{equation} 
and so, each eigenvalue of \(\mathbf{Q}^\top\), say \(\mu\), satisfies, 
\begin{equation}
    |\mu + k| \leqslant k\,.
\end{equation}
Therefore, we have that \(\textrm{Re}(\mu) \leqslant 0 \) for all eigenvalues \(\mu\), and \(\textrm{Re}(\mu) = 0\) only when \(\mu\) corresponds to the leading eigenvalue of \(\mathbf{A}\).

The Perron-Frobenius Theorem also tells us that the leading eigenvalue, \(r\), corresponds to some  eigenvector \(\mathbf{v}^+\) with \(v^+_i > 0\) for all \(i\).
Since \(0\) is the eigenvalue of \(\mathbf{Q}^\top\) with maximal real part, the eigenpair of \(\mathbf{A}\), \((r,\, \mathbf{v}^+)\) must correspond to an eigenpair of \(\mathbf{Q}^\top\), \((0,\, \mathbf{v}^+)\).
Also, from Theorem~\ref{thm:perron_frobenius}, we have that \(\mathbf{v}^+\) is unique up to scalar multiplication.
This proves Proposition~\ref{thm:unique_global_equilibrium}. 

Note however, these results provide no guarantee that eigenvalues (except the leading eigenvalue) are real-valued. 
As demonstrated above, some Markov models, with particular choices of transition rates, have non-real eigenvalues, leading to oscillatory solutions. 
The realness of these eigenvalues may be guaranteed under certain conditions that we discuss next.

\section{Microscopic Reversibility} 
\label{sec:app_reversibility}
In addition to Conditions (\ref{eqn:Q_conditions_start}--\ref{eqn:app_Q_conditions_end}),  \emph{microscopic reversibility} is another property that is often imposed on Markov models and concerns the balance of fluxes between individual states in the model. 
Consider a simple reaction \(A\rightleftharpoons B\), then microscopic reversibility is satisfied if, at equilibrium, the flux from \(A\) to \(B\) is equal to the flux of the reverse reaction \(B\) to \(A\). 
A model is said to satisfy microscopic reversibility if this is true for all reactions in the model. 
A more precise definition for Markov models is given below.

Microscopic reversibility is varyingly referred to as ``detailed balance'', ``thermodynamic equilibrium'' or sometimes simply ``reversibility'' \citep{colquhoun_how_2004}. 
Microscopic reversibility is seen as a natural, macroscopic consequence of the laws of thermodynamics when the model is assumed to be a closed system at equilibrium \citep{ederer2007thermodynamically, crooks2011thermodynamic}.
Although, microscopic reversibility has not always been enforced in ion-channel models (for example in \citet{li_improving_2017}).
Additionally, microscopic reversibility often leads to a reduction in the number of model parameters---a particularly attractive property when fitting models \citep{bates_nonlinear_1988, colquhoun_how_2004, fink_markov_2009}. 

For brevity, we proceed using the term ``reversibility'', and adopt the following definition.
\begin{definition}
Consider a Markov model which satisfies Conditions  (\ref{eqn:Q_conditions_start}--\ref{eqn:app_Q_conditions_end}) with transition-rate matrix, \(\mathbf{Q}\) such that \(\mathbf Q^\top\) is irreducible. 
Then by Proposition~\ref{thm:unique_global_equilibrium}, a  unique equilibrium point, \(\mathbf x_\infty=\left(x_\infty^{(i)}\right)_{i=1}^{N}\), exists. 
We say that this Markov model is \emph{reversible} if it satisfies,
\begin{equation}
    Q_{i, j} {x}_\infty^{(i)} = Q_{j, i} {x}_\infty^{(j)}, \label{eqn:detailed_balance}
\end{equation} for all \(i\) and \(j\). That is, at equilibrium, the flux from state \(i\) to state \(j\) is equal to the flux from state \(j\) to state \(i\).
This is also commonly known as a detailed balance equation \cite{keener2009mathematical}.
\end{definition}
Note that this definition of reversibility is wholly concerned with properties of \(\mathbf{Q}\) because \(\mathbf{x}_\infty\) is precisely the right-eigenvector of \(\mathbf{Q}^\top\) with \(\mathbf{Q}^\top\mathbf{x}_\infty = 0\), that is, the null eigenvector, which is unique up to a scalar as discussed in Section~\ref{sec:app_steady_states}. 
% We may extend this definition to the transition-rate matrix and call \(\mathbf{Q}\) \emph{reversible} if and only if the corresponding Markov model is reversible. 

Note that this definition does not strictly agree with the definition found in \citet{kelly_reversibility_1981}, where to be reversible, a CTMC must be \emph{stationary}---analogous to a Markov model initially being at equilibrium, that is \(\mathbf{x}(0) = \mathbf{x}_\infty\). In contrast to this, the above definition merely concerns the properties of \(\mathbf{Q}\), not the model's initial conditions.
The following result shows that there may be qualitative differences in the behaviour of reversible Markov models and non-reversible Markov models.

\begin{proposition} \label{prop:reversible_real_eigenvalues}
Let \(\mathbf Q\) be an irreducible matrix satisfying conditions  (\ref{eqn:Q_conditions_start})-(\ref{eqn:app_Q_conditions_end}). Additionally, suppose \(\mathbf{Q}\) is reversible. 
Then, the eigenvalues and eigenvectors of \(\mathbf{Q}^\top\) are real.    
\end{proposition}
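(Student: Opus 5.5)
The plan is to show that \(\mathbf{Q}^\top\) is similar, via a real diagonal matrix, to a real symmetric matrix. The spectral theorem for symmetric matrices then does the rest. By Proposition~\ref{thm:unique_global_equilibrium} and the discussion in Section~\ref{sec:perron}, the equilibrium \(\mathbf{x}_\infty\) can be taken to be the Perron vector \(\mathbf{v}^+\), so \(x_\infty^{(i)} > 0\) for every \(i\). This positivity is the key input. It lets us define the real invertible diagonal matrices
\[
\mathbf{D} = \operatorname{diag}\bigl(x_\infty^{(1)},\dots,x_\infty^{(N)}\bigr), \qquad \mathbf{D}^{1/2} = \operatorname{diag}\Bigl(\sqrt{x_\infty^{(1)}},\dots,\sqrt{x_\infty^{(N)}}\Bigr).
\]

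First, I will rewrite the detailed balance condition~\eqref{eqn:detailed_balance} in matrix form. The \((i,j)\) entry of \(\mathbf{D}\mathbf{Q}\) is \(x_\infty^{(i)} Q_{i,j}\), so \eqref{eqn:detailed_balance} says exactly that \(\mathbf{D}\mathbf{Q}\) is symmetric, i.e.\ \(\mathbf{D}\mathbf{Q} = \mathbf{Q}^\top \mathbf{D}\). For \(i=j\) the condition is trivial, so only the off-diagonal entries matter.

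Next, set
\[
\mathbf{S} = \mathbf{D}^{-1/2}\,\mathbf{Q}^\top\,\mathbf{D}^{1/2},
\]
with entries \(S_{i,j} = Q_{j,i}\sqrt{x_\infty^{(j)}/x_\infty^{(i)}}\). Detailed balance gives \(Q_{j,i}x_\infty^{(j)} = Q_{i,j}x_\infty^{(i)}\). Substituting, \(S_{i,j} = Q_{i,j}\sqrt{x_\infty^{(i)}/x_\infty^{(j)}} = S_{j,i}\), so \(\mathbf{S}\) is real symmetric. Equivalently, \(\mathbf{S} = \mathbf{D}^{-1/2}(\mathbf{Q}^\top\mathbf{D})\mathbf{D}^{-1/2}\), which is symmetric because \(\mathbf{Q}^\top\mathbf{D}\) is.

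By the spectral theorem, \(\mathbf{S}\) has real eigenvalues and an orthonormal basis of real eigenvectors \(\mathbf{w}_1,\dots,\mathbf{w}_N\). Since \(\mathbf{Q}^\top = \mathbf{D}^{1/2}\mathbf{S}\mathbf{D}^{-1/2}\) is similar to \(\mathbf{S}\), the eigenvalues of \(\mathbf{Q}^\top\) are the same real numbers. Its eigenvectors are \(\mathbf{D}^{1/2}\mathbf{w}_k\), which are real and still form a basis. Hence \(\mathbf{Q}^\top\) is diagonalisable over \(\mathbb{R}\).

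The one subtlety I expect is the meaning of ``the eigenvectors are real''. Any eigenvector may be multiplied by a complex scalar, so the accurate claim is that each eigenspace has a real basis. This follows from the real eigenbasis \(\{\mathbf{D}^{1/2}\mathbf{w}_k\}\). I would state the result in that form.

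The only genuine obstacle is justifying strict positivity of \(\mathbf{x}_\infty\), which is needed to form \(\mathbf{D}^{\pm 1/2}\). This comes directly from Theorem~\ref{thm:perron_frobenius} applied to \(\mathbf{A} = \mathbf{Q}+k\mathbf{I}\), as in Section~\ref{sec:perron}, together with the uniqueness up to scaling in Proposition~\ref{thm:unique_global_equilibrium}(ii).

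As a remark, \(\mathbf{S}\) is negative semidefinite by Proposition~\ref{prop:eigenvalues_lessthan_zero}. So every solution of \eqref{eqn:governing_equation} is a real combination of non-increasing exponentials and cannot oscillate.
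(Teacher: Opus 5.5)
Your proof is correct and is essentially the paper's argument: both symmetrise $\mathbf{Q}$ by a diagonal similarity with entries $\sqrt{x_\infty^{(i)}}$. (Your $\mathbf{D}^{-1/2}\mathbf{Q}^\top\mathbf{D}^{1/2}$ is the transpose of the paper's symmetric $\mathbf{D}\mathbf{Q}\mathbf{D}^{-1}$, where the paper's $\mathbf{D}$ is your $\mathbf{D}^{1/2}$.) You also go slightly further by exhibiting the real eigenbasis $\mathbf{D}^{1/2}\mathbf{w}_k$, a part of the statement that the paper's proof leaves implicit.

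One small fix is needed. Strict positivity of $\mathbf{x}_\infty$ should come from Perron--Frobenius applied to $\mathbf{Q}^\top + k\mathbf{I}$, not to $\mathbf{Q}+k\mathbf{I}$. The matrix $\mathbf{Q}^\top + k\mathbf{I}$ is still nonnegative and irreducible, since transposing only reverses the edges of the strongly connected digraph. By contrast, the Perron vector of $\mathbf{Q}+k\mathbf{I}$ is simply $\mathbf{1}$, which is a null vector of $\mathbf{Q}$ rather than of $\mathbf{Q}^\top$.
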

We prove that above result by assuming that our model is reversible. Then, we may construct the diagonal matrix \(\mathbf{D}\) with \(D_{i, i} = \sqrt{x_\infty^{(i)}}\) for all \(i\). 
\(\mathbf{D}\) is then a diagonal matrix, and from Proposition~\ref{thm:unique_global_equilibrium} we see that the irreducibility of \(\mathbf{Q}\) guarantees that \(\mathbf{D}\) has no zeroes on its diagonal. 
Hence, \(\mathbf{D}\) is invertible. 
Then, we have,
\begin{align}
(\mathbf{D} \mathbf{Q} \mathbf{D}^{-1})_{j, i} 
&= \sqrt{{x}_\infty^{(j)}}  Q_{i, j} \dfrac{x_\infty^{(i)}}{x_\infty^{(j)}} \dfrac{1}{\sqrt{{x}_\infty^{(i)}}} , \; \; \; \text{(using Equation~\eqref{eqn:detailed_balance})} \nonumber \\
&= \sqrt{{x}_\infty^{(i)}}  Q_{i, j} \dfrac{1}{\sqrt{{x}_\infty^{(j)}}},\nonumber \\
&= (\mathbf D \mathbf Q \mathbf D^{-1})_{i, j}.
\end{align}
Now, the characteristic polynomial of \(\mathbf{Q}\)  is identical to that of  \(\mathbf{D}\mathbf{Q}\mathbf{D}^{-1}\) meaning these matrices have identical eigenvalues. And because \(\mathbf{D}\mathbf{Q}\mathbf{D}^{-1}\) is symmetric, these eigenvalues are real \citep{meyer_matrix_2000}.
Therefore, whenever microscopic reversibility holds, \(\mathbf{Q}\), and hence \(\mathbf{Q}^\top\), have only real eigenvalues.
Consequently, eigendecomposition matrix-exponential methods \citep{moler2003nineteen} can be used to solve the governing equation and simulate the model.
Without microscopic reversibility, there is no such guarantee, and numerical implementations should account for non-real eigenvalues and eigenvectors.

This result provides some insight into the implications  of reversibility on the dynamics of Markov models. 
We have seen above that non-real eigenvalues are, in fact, possible for non-reversible models.
These eigenvalues, with \(\Im(\lambda) \neq 0\), describe oscillatory behaviour (similar to a damped oscillator when \(\textrm{Re}(\lambda) \leqslant 0 \)). 
Here, we have shown that such eigenvalues, and hence oscillatory behaviours, are not possible when \(\mathbf{Q}\) is reversible. 

Returning to the three-state Markov model discussed in the previous section (see Equation~\eqref{eqn:3-state-example1}), we can see that such a Markov model is not reversible in general because the steady state is,
\begin{equation}
    \mathbf{x}_\infty = \alpha \begin{pmatrix}
        \dfrac{k_2k_4 + k_2k_5 + k_3k_5}{k_1k_3 + k_2k_6 + k_3k_6} \\ \\
        \dfrac{k_1k_4 + k_1k_5 + k_4k_6}{k_1k_3 + k_2k_6 + k_3k_6}\\ \\
        1
    \end{pmatrix},
\end{equation} 
for some scaling factor, \(\alpha\).  
Now considering the flux from state \(1\) to state \(2\) we have,
\begin{align}
x_\infty^{(1)}Q_{1,2} 
    &= \alpha k_1  \dfrac{k_2k_4 + k_2k_5 + k_3k_5}{k_1k_3 + k_2k_6 + k_3k_6},
\end{align}  
whilst, the flux in the opposite direction is,
\begin{align}
    x_\infty^{(2)}Q_{2,1} 
    =& \alpha k_2 \dfrac{k_1k_4 + k_1k_5 + k_4k_6}{k_1k_3 + k_2k_6 + k_3k_6}.
\end{align} 
Hence, this Markov model is reversible if and only if,
\begin{equation}
k_1k_3k_5 = k_2k_4k_6. \label{eqn:proto_kolmogorov_3-state}
\end{equation} 
As we saw previously, when our model is not reversible, or equivalently, \(k_1k_3k_5 \neq k_2k_4k_6\), the transition-rate matrix, \(\mathbf{Q}\), may have complex eigenvalues. 
In the following section, we see that a similar condition may be used, in general, to check the reversibility of any  Markov model.

\subsection{Kolmogorov's Criterion} \label{sec:kolmogorov}
It is possible to ensure reversibility by checking that the detailed-balance equation (Equation~\eqref{eqn:detailed_balance}) holds. 
However, this requires us to compute the steady state. 
Kolmogorov's criterion provides an alternative method which is far more convenient. 

\begin{theorem} [Kolmogorov's Criterion \citep{kelly_reversibility_1981}] \label{thm:kolmogorov_reversibility}
    A Markov model with transition-rate matrix \(\mathbf{Q}\), with corresponding graph, \(G\), is reversible if and only if, for every cycle in \(G\), the products of transition rates in each direction are equal. 
    In other words, Equation~\eqref{eqn:detailed_balance} holds if and only if,
    \begin{equation}
        \prod_{i=1}^{C_n} Q_{s_{i-1}, s_{i}} = \prod_{i=1}^{C_n} Q_{s_{i}, s_{i-1}} \label{eqn:app_kolmogorov_condition}
    \end{equation} 
    for every cycle  \(C = (s_0, \ldots s_{C_{n}})\) of \(G\), where \(s_{C_{n}} = s_0\).
\end{theorem}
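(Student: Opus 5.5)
We prove the two directions separately. Throughout we assume, as in the definition of reversibility, that \(\mathbf{Q}\) is irreducible, so that Proposition~\ref{thm:unique_global_equilibrium} supplies a unique (up to scaling) equilibrium \(\mathbf{x}_\infty\). The Perron--Frobenius argument of Section~\ref{sec:perron} moreover gives \(x_\infty^{(i)}>0\) for all \(i\).

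\emph{Necessity.} Suppose Equation~\eqref{eqn:detailed_balance} holds, and take a cycle \(C=(s_0,\ldots,s_{C_n})\) with \(s_{C_n}=s_0\). Multiplying the detailed-balance relations \(Q_{s_{i-1},s_i}x_\infty^{(s_{i-1})}=Q_{s_i,s_{i-1}}x_\infty^{(s_i)}\) over \(i=1,\ldots,C_n\) gives
\[
\prod_{i=1}^{C_n} Q_{s_{i-1},s_i}\prod_{i=1}^{C_n}x_\infty^{(s_{i-1})}=\prod_{i=1}^{C_n} Q_{s_i,s_{i-1}}\prod_{i=1}^{C_n}x_\infty^{(s_i)}.
\]
Because the cycle closes, the two products of \(x_\infty\) components are the same multiset and hence equal. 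They are strictly positive, so they cancel and leave Equation~\eqref{eqn:app_kolmogorov_condition}.

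\emph{Sufficiency.} Assume Equation~\eqref{eqn:app_kolmogorov_condition} holds for every cycle.

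First, we show edges come in pairs: \(Q_{i,j}>0\) implies \(Q_{j,i}>0\). By strong connectivity there is a walk from \(j\) back to \(i\). Appending the edge \(i\to j\) gives a closed walk whose forward product is positive. The criterion then forces the reverse product to be positive too, and that product contains \(Q_{j,i}\). Here "cycle" should be read as allowing closed walks, or else we reduce to a simple cycle through \(i\to j\), which exists inside the strongly connected component.

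Next, we construct a candidate measure. Fix a reference state \(s_0\) and set \(\pi_{s_0}=1\). For any other state \(j\), choose a path \(s_0=t_0,t_1,\ldots,t_m=j\) along which every edge has a positive reverse, and define
\[
\pi_j=\prod_{l=1}^m \frac{Q_{t_{l-1},t_l}}{Q_{t_l,t_{l-1}}}.
\]
The key step is that \(\pi_j\) does not depend on the path chosen. Given two such paths, follow the first and then the second in reverse. This is a closed walk, and Equation~\eqref{eqn:app_kolmogorov_condition} applied to it says exactly that the two ratio-products agree.

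This well-definedness is the main obstacle. The criterion is stated for cycles, while the concatenated path is only a closed walk that may revisit vertices. I would handle this by decomposing any closed walk into simple cycles, plus back-and-forth traversals of single edges, which contribute trivially. The product identity is multiplicative over this decomposition, so it holds for every closed walk.

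\emph{Verifying detailed balance.} For adjacent states \(i,j\), extend a path to \(i\) by the edge \(i\to j\). Path independence then gives \(\pi_j=\pi_i Q_{i,j}/Q_{j,i}\), that is, \(\pi_iQ_{i,j}=\pi_jQ_{j,i}\). For non-adjacent pairs both sides are zero. Summing over \(i\),
\[
(\mathbf{Q}^\top\boldsymbol{\pi})_j=\sum_i \pi_iQ_{i,j}=\pi_j\sum_i Q_{j,i}=0
\]
by the conservation constraint. So \(\boldsymbol{\pi}\) is a null vector of \(\mathbf{Q}^\top\). Part (ii) of Proposition~\ref{thm:unique_global_equilibrium} then gives \(\mathbf{x}_\infty=\alpha\boldsymbol{\pi}\), and Equation~\eqref{eqn:detailed_balance} follows.
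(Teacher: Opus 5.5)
Your proof is correct and is essentially the argument the paper relies on. The paper gives no proof of its own and defers to Kelly's, which runs exactly as yours does: necessity by multiplying the detailed-balance relations around a closed walk, and sufficiency by fixing a reference state, defining a measure through ratios of transition rates along paths, using the criterion to show path-independence, then passing from detailed balance to global balance and invoking uniqueness of the equilibrium (Proposition~\ref{thm:unique_global_equilibrium}). Your two extra steps are sound: showing that \(Q_{i,j}>0\) forces \(Q_{j,i}>0\), and decomposing closed walks into simple cycles plus back-and-forth edge traversals. They are precisely what is needed to match the paper's statement, which is phrased for cycles of \(G\), rather than Kelly's version for arbitrary finite sequences of states.
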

For example, we saw above in the three-state model that reversibility was equivalent to the condition 
\(k_1 k_3 k_5 = k_2 k_4 k_6\). This is an equality between the product of transition rates around the only cycle in each direction (as illustrated in Figure~\ref{fig:example_graphs}(\textbf{a})).

This result is proved in \citet{kelly_reversibility_1981} for continuous time Markov chains (CTMCs), but an almost identical proof can be applied to ODEs of the form of Equation~\eqref{eqn:governing_equation} satisfying Conditions~\eqref{eqn:Q_conditions_start}--\eqref{eqn:app_Q_conditions_end}. Using this result, we may ensure that a given Markov model is reversible, by simply drawing the diagram of our model (such as those in Figure~\ref{fig:example_models}), and checking whether the product of transition rates around each cycle is equal in both directions. When this condition is satisfied, the Markov model is reversible.
However, as explained by \citet{colquhoun_how_2004}, we need only check that each of a subset of cycles satisfies Equation~\eqref{eqn:app_kolmogorov_condition}---that being the \emph{cycle basis} of \(G\) \citep{yang2006imposing}. 

\section{Simplifying the governing equation}\label{sec:computational}
We conclude by explaining how our Markov models' conservation constraint (Equation~\eqref{eqn:app_Q_conditions_end}) can be used to simplify the governing equation (Equation~\eqref{eqn:governing_equation}). 
As discussed in \citet{fink_markov_2009}, this operation effectively simplifies the ODE system, allowing the governing equation to be written in terms of \(N-1\) state variables instead of \(N\). In Section~\ref{sec:example_ion_channel}, we provide an example where this simplification of the ODE system becomes significantly more efficient to solve. 

Such a simplification of the system of ODEs is commonly applied to Markov models of ion-channel currents \citep{fink_markov_2009, beattie_sinusoidal_2018}. 
\citet{colquhoun1995q} provide a description of this procedure, which may be applied to any Markov model (that is, a system of ODEs satisfying Equations~(\ref{eqn:Q_conditions_start}--\ref{eqn:app_Q_conditions_end})). 
In this section, we provide an explicit derivation of their method, showing that there are many choices of such simplified systems.

First, to derive the scheme suggested by \citet{colquhoun1995q}, we seek a reduced system of ODEs,
\begin{equation}
    \dfrac{\textrm{d}\tilde{\mathbf x}}{\textrm{d}t} = \mathbf{A} \tilde{\mathbf{x}} + \mathbf{b},
\end{equation} 
where \(\tilde{\mathbf{x}}\) is the top \(N - 1\) components of \(\mathbf{x}\), \(\mathbf{A}\) is a \((N-1) \times (N-1)\) matrix, and \(\mathbf{b}\) is a column vector of size \((N-1)\).  
Then, by setting \(x_n = 1 - \sum_{j=1}^{N-1}x_j\), our constraint, \(\mathbf{1}^\top \mathbf{x}=1\) is always satisfied. 

Let \(\mathbf{a}\) and \(\mathbf{b}\) be \((N-1)\) column vectors, and \(c\) be a scalar such that,
\begin{equation}
\mathbf{Q}^\top(V) = 
\begin{pmatrix}
    \tilde{\mathbf{Q}}^\top & \mathbf b \\ \mathbf{a}^\top & c
\end{pmatrix},
\end{equation}
where \(\tilde{\mathbf{Q}}\) is an \((N-1)\times(N-1)\) matrix containing the  elements in the upper-left rows and columns of \(\mathbf{Q}\). 
Then, we multiply both sides of Equation~\eqref{eqn:governing_equation} by the \((N-1) \times N\) matrix,
\begin{equation}
    \begin{pmatrix}
        \mathbf{I_{N-1}} & 0
    \end{pmatrix} =\begin{pmatrix}
    1 & 0 & 0 & \ldots & 0 & 0 \\
    0 & 1 & 0 & \ldots & 0 & 0\\
    0 & 0 & 1 & \ldots & 0 & 0\\
    \vdots & \vdots & & \ddots & \vdots & \vdots \\
    0 & 0 & 0 & \ldots & 1 & 0\\
    \end{pmatrix},
\end{equation} 
which is simply an identity matrix with a column of zeros appended to its right side. Then, we have,
\begin{align}
      \begin{pmatrix}
        \mathbf{I_{N-1}} & \mathbf{0}
    \end{pmatrix} 
    \odv{}{t} \mathbf{x}   &= \left(\begin{array}{l  r}
      \tilde{\mathbf{Q}}^\top \;\;& \mathbf{I}_{N-1} \mathbf{b}
      \end{array} 
      \right) \mathbf{x}, \nonumber \\
       &=
  \begin{pmatrix} \tilde{\mathbf{Q}^\top} \tilde{\mathbf{x}}& \\
  0&
  \end{pmatrix}
      +  x_N\begin{pmatrix} \mathbf{b}\\ 
    0 \end{pmatrix}
    .
\end{align} 
Hence, substituting \(x_N = 1 - \tilde{\mathbf{x}}^\top\mathbf{1}\) and considering the first \(N - 1\) components,
\begin{align}
      \odv{}{t}
      \tilde{\mathbf{x}} =
      \left(\tilde{\mathbf Q}^\top - \mathbf{b}\mathbf{1}^\top \right) \tilde{\mathbf{x}} + \mathbf{b}. \label{eqn:app_simplified_ode1}
\end{align}
After solving this system, it is simple to use the relation \(x_N = 1 - \mathbf{1}^\top\tilde{\mathbf{x}}\), to recover \(\mathbf{x}\) from \(\tilde{\mathbf{x}}\).

However, Equation~\eqref{eqn:app_simplified_ode1} is but one of many ways in which we use the conservation constraint to simplify  Equation~\eqref{eqn:governing_equation}. 
Suppose \(\mathbf{T}\) is a non-singular matrix such that each element of the bottom row is \(1\). 
Then, we have the block matrix,
\begin{equation}
    \mathbf{T} = \begin{pmatrix} \mathbf{U} \\ \mathbf{1}^\top\end{pmatrix}.
\end{equation}

Now, we let \(\mathbf{s} = \mathbf{T}\mathbf{x}\) be the state vector of our new, transformed system of ODEs with,
\begin{align}
    \odv{\mathbf{s}}{t} &= \odv{\mathbf{Tx}}{t} \nonumber \\
                        &= \mathbf{T}\odv{\mathbf{x}}{t} \nonumber  \\
                        &= \mathbf{T}\mathbf{Q}^\top \mathbf{x} \nonumber  \\ 
                        &= \mathbf{T}\mathbf{Q}^\top\mathbf{T}^{-1}\mathbf{s},
                        \label{eqn:governing_equation_s}
\end{align}
which is possible because we chose \(\mathbf{T}\) to be non-singular.
If we let 
\(\mathbf{W} = \mathbf{T}\mathbf{Q}^\top\mathbf{T}^{-1}\)
, we may write,
\begin{equation}
    \odv{\mathbf{s}}{t} = \mathbf{W}\mathbf{s}.
\end{equation}
Since each element of the bottom row of \(\mathbf{T}\) is \(1\), the bottom component of \(\mathbf{s}\) is \(s_N = \mathbf{1}^\top \mathbf{x} = 1\), and so, \(s_N\) is constant. 
Hence, we may simplify Equation~\eqref{eqn:governing_equation_s}, by writing,
\begin{equation}
    \odv{\tilde{\mathbf{s}}}{t} = \mathbf{C}\tilde{\mathbf{s}} + \mathbf{d},
\end{equation}
where \(\tilde{\mathbf{s}}\) is the first \(N-1\) rows of \(\mathbf{s}\), \(\mathbf{C}\) is the first \(N-1\) rows and columns of \(\mathbf{W}\), 
and \(\mathbf{d}\) is first \(N-1\) entries of the rightmost column of \(\mathbf{W}\).
Now, consider the bottom row of \(\mathbf{W}\), denoted \(\mathbf{W}_N\).
We have, \(\mathbf{W}_N = (\mathbf{T}\mathbf{Q}^\top\mathbf{T}^{-1})_N\),
and \((\mathbf{T}\mathbf{Q}^\top)_N = \mathbf{1}^\top\mathbf{Q}^\top = \mathbf{0}\) (using Condition \eqref{eqn:app_Q_conditions_end}), and so,
\(\mathbf{W}_N = \mathbf{0}^\top \mathbf{T}^{-1}= \mathbf{0}^\top\). Hence, \(\mathbf{W}\) is block upper-triangular with determinant,
\begin{equation}
    \textrm{det}(\mathbf{W} - \lambda \mathbf{I}) = -\lambda \textrm{det}(\mathbf{C} - \lambda \mathbf I)\,.
\end{equation}
Also, Proposition~\eqref{thm:unique_global_equilibrium}(ii) guarantees that the null eigenvalue is simple (corresponding to a unique equilibrium) meaning that \(\textrm{rank}(\mathbf{W}) = N - 1\) and so,
\(\mathbf{C}\) is necessarily nonsingular and its eigenvalues are exactly the non-zero eigenvalues of \(\mathbf{W}\) (that is, the non-zero eigenvalues of \(\mathbf{Q}^\top\)).

There are many possible choices for the matrix, \(\mathbf{T}\). 
Perhaps the most obvious choice for \(\mathbf{U}\) is the (\(N - 1\))-dimensional identity matrix, \(\mathbf{I}_{N-1}\), with a zero vector, \(\mathbf{0}\), appended to its right-hand side. In this case we have,
\begin{equation}
  \mathbf{T} =  \begin{pmatrix}
        1 & 0 & 0 & \ldots &0 & 0 \\
        0 & 1 & 0 &  \ldots & 0 & 0 \\
        0 & 0 & 1 & \ldots & 0 & 0 \\
        \vdots & \vdots & \vdots & \ddots & \vdots & \vdots \\
        0 & 0 & 0 & \ldots & 1 & 0 \\ 
        1 & 1 & 1 & \ldots & 1 & 1 
    \end{pmatrix}
    , \label{eqn:T_example1}
\end{equation}
from which obtain Equation~\eqref{eqn:app_simplified_ode1}. 
Note that for a reversible model, the transition-rate matrix \(\mathbf{Q}\) has only real eigenvalues, and we use a real-valued transformation matrix \(\mathbf{T}\) (such as the one defined in Equation~\eqref{eqn:T_example1}), we guarantee that \(\mathbf{C}\) has real eigenvalues, and so the system can be solved using diagonalisation with only real-valued matrices \citep{moler2003nineteen}. 
This is not the case when non-reversible models are used, in which case, it is possible that the resulting eigenvectors and eigenvalues are non-real, as discussed in Section~\ref{sec:app_reversibility}.

Now, instead of eliminating the \(N\)\textsuperscript{th} state, \(\mathbf{x}_N\), we could eliminate any other state. 
For example, to eliminate the first state, \(x_1\) instead of \(x_N\), we may define,
\begin{equation}
    \mathbf{T} = \begin{pmatrix}
        0 & 0 & 0 & \ldots & 0 &  1 \\
        0 & 1 & 0 &  \ldots & 0 &  0 \\
        0 & 0 & 1 & \ldots & 0 & 0 \\
        \vdots & \vdots & \vdots & \ddots & \vdots & \vdots \\
        0 & 0 & 0 & \ldots & 1 & 0 \\
        1 & 1 & 1 & \ldots & 1 & 1 
    \end{pmatrix}
    .
\end{equation} 

However, there are many other possible choices of \(\mathbf{T}\) which do not correspond to the removal of a single state.
It may be possible to choose \(\mathbf{T}\) such that the resultant system of ODEs has some desirable properties. 
%For example, we may apply some heuristic with the aim of reducing the error in matrix exponential computations. For example, we could seek a matrix \(\mathbf{T}\) that minimises the condition number of \(\mathbf{W}\) \citep{moler2003nineteen}.
We explore the effect of some possible choices \(\mathbf{T}\) in the following section.

\clearpage
\section{Examples} \label{sec:examples}
We introduce examples of the application of these results to two mathematical biology models. 
In both examples, the model in question contains a Markov model, but also includes some other elements. 
In the first example, the Markov model has transition rates which depend on an external variable, and the state occupancies are mapped to some observable output via an \emph{observation function}; in the second the Markov model is embedded within a larger deterministic chemical reaction network model. 
In both examples, we use the results introduced in Section~\ref{sec:app_steady_states}.

\subsection{Example I: Simulation of ion channel dynamics} \label{sec:example_ion_channel}
There are many Markov models which have been proposed to describe ion channel dynamics. 
In this section, we consider a selection of models of the \emph{rapid delayed rectifier} potassium current (\(I_\text{Kr}\)). 
This current plays an important role in the electrophysiology of various cells, including heart-muscle cells \citep{sanguinetti_herg_2006}. 
Deterministic, ODE-based Markov models (like those described in the previous sections) are commonly used to model the effect of thousands of channels carrying a current which helps to {repolarise} a cell after an action potential.

As in the previous sections, such a Markov model can be formulated as a system of ODEs in the form of Equation~\eqref{eqn:governing_equation}. Here, the state vector \(\mathbf{x}\) represents the proportion of channels in each of a collection of possible conformational states, and the transition-rate matrix \(\mathbf{Q}\) describes the rate at which the population of channels transitions between these states. 
The current, I\textsubscript{Kr}, is voltage-gated, meaning that these transition rates are dependent on an external variable, the transmembrane potential, \(V\), which itself can change through time,  $t$. 
To show this dependence explicitly we rewrite Equation~\eqref{eqn:governing_equation}, 
\begin{equation}
    \odv{x}{t} = \mathbf{Q}(V(t))^\top \mathbf{x} \, . \label{eqn:governing_equation_cardiac}
\end{equation}
We further require that for any \(i, j\) such that \(i\neq j\), we have either \(Q_{i, j}(V) = 0\) for all \(V\) or \(Q_{i, j} > 0\) for all \(V\). In other words, whilst the transition rates themselves may change, the fundamental graph properties of the model are independent of the transmembrane potential.

Such models may be incorporated into whole-cell action-potential models \citep{rudy_computational_2006}. 
Often, such models are fitted to data obtained from patch-clamp experiments performed on cells during which the current of interest is pharmacologically isolated \citep{wang_quantitative_1997,beattie_sinusoidal_2018, lei_rapid_2019-1}. In such experiments, this current may be recorded whilst the transmembrane potential, \(V\), is manipulated. 

As the experiments are performed after the cell has equilibrated, it is assumed that at the beginning of the experiment, where \(t=0\), we have,
\begin{equation}
    \mathbf{x}(0) = \mathbf{x}_\infty(V_0),
\end{equation}
where \(\mathbf{x}_\infty(V_0)\) is the steady state, guaranteed to exist (for any \(V\)) by Proposition~\ref{thm:unique_global_equilibrium}. 
This fact underpins typical inference approaches \citep{clerx_four_2019}, where transition-rate parameters are inferred from data, without considering the system's initial conditions as additional independent parameters. 
For example, when inferring transition rates from current recordings obtained from voltage-clamp experiments \citep{fink_markov_2009, beattie_sinusoidal_2018, clerx_four_2019}. 
Here, each proposed parameter set in an optimisation \(\mathbf{x}_\infty(V_0)\) is computed and used as the initial condition (for a forward simulation and subsequent objective function evaluation), which should be a good approximation for the conditions within experiments where the cells are held at $V_0$ for an extended period before recordings begin.
An observation function is then used to compare the Markov model and observed data, typically of the form,
\begin{equation}
    I_\text{Kr} = gx_\text{O}(V - E_\text{Kr}),
\end{equation}
where \(g\) is a further parameter known as the maximal conductance, \(x_\text{O}\) is the state variable corresponding to a single \emph{open state}, \(V\) is the transmembrane potential (controlled by the experimenter) and \(E_\text{Kr}\) is an additional constant (the reversal potential, often computed from first principles with the Nernst equation). 
For any Markov model satisfying Conditions~\eqref{eqn:Q_conditions_start}--\eqref{eqn:app_Q_conditions_end}, a stable equilibrium is assumed at time \(t=0\) which removes the need to infer the initial condition, \(\mathbf{x}(0)\). 
Instead, we only need to infer the transition-rate parameters and \(g\).
Whilst only one model structure is used here, the same approach may be taken with many other proposed Markov model structures, such as those discussed in \citet{mangold2021identification, shuttleworth_evaluating_2025}. 

\begin{figure}[htbp]
    \centering
   \includegraphics{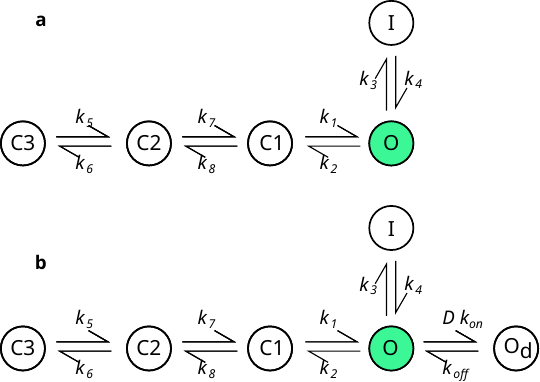}
    \caption{Markov models of I\textsubscript{Kr} used in this section. \textbf{a}: the unaltered \citet{wang_quantitative_1997} Model with no drug-bound state(s). \textbf{b} the \citet{wang_quantitative_1997} model with an additional drug-bound state. In the drug-bound model, there are two additional transition rates \(D k_\text{on}\) and \(k_\text{off}\). In this way, when \(D=1\), the model is strongly connected and hence irreducible. If \(D=0\), the model is not irreducible, but channels in the drug-bound state, \(O_D\), will transition into the other, drug-free states. However, the existence of a globally stable equilibrium is still guaranteed for the drug-bound model as well as the drug-free model. 
    }
    \label{fig:IKr_models}
\end{figure}

\subsubsection{Reducible Markov models in Cardiac Electrophysiology}
Other models used to model voltage-gated currents like I\textsubscript{Kr} include so-called Hodgkin-Huxley gating models. These draw their inspiration from Hodgkin and Huxley's seminal action-potential model \citep{hodgkin_quantitative_1952-1} which includes \emph{gating variables} of the form,
\begin{align}
    \odv{m}{t} &= \alpha_m (1 - m) - \beta_m m, \\ 
   \intertext{and} \odv{h}{t} &= \alpha_h (1 - h) - \beta_h h, 
\end{align} 
which combine to form Hodgkin and Huxley's sodium current, \(I_\text{Na}\) which is proportional to \(m^3 h\). 
Such a model can be seen as the combination of two separate, irreducible Markov models, though the existence of a unique stable equilibrium may also be proved directly \citep{keener2009mathematical}. 
Products of Hodgkin-Huxley gates may be combined into a single equivalent irreducible Markov model with a higher number of states \citep{rudy_computational_2006}, as done for I\textsubscript{Kr} in \citet{clerx_four_2019} (using a Hodgkin-Huxley style formulation) and \citet{beattie_sinusoidal_2018} (using a single irreducible Markov model). 
Either way, the results of the preceding sections relating to the existence and uniqueness of steady states, reversibility, and the related methods also apply to Hodgkin-Huxley style ion channel gating models.

\subsubsection{Drug binding models}
Markov models of cardiac ion-channel currents are commonly augmented with drug-block components \citep{li_improving_2017, lei2024impact}. 
Additional states and transition-rates are included, representing the way in which certain compounds can modify the kinetics of an ion channel. 
Such models are comprised of two components. 
A subnetwork modelling kinetics of ion channels in the absence of drugs (for example, those Markov models shown previously in this section), and a subnetwork that models the kinetics of drug block. 
Such an example is shown in Figure~\ref{fig:IKr_models}. 
The additional states representing the binding of drugs to an ion channel are included in the model as extra rows and columns in the transition-rate matrix, and, accordingly, additional components of \(\mathbf{x}\).
Note that some of these transition rates are dependent on the concentration of drug present, denoted by \(D\).

However, the inclusion of these additional states and transition rates may have an effect on the properties of the model. For instance, the model is reducible when \(D=0\), because there is no path to the drug-bound state from any other state in the model. However, a unique equilibrium can still be computed when \(D=0\) by considering only the Markov model representing the drug-free kinetics of ion-channel gating (since no channels can be drug-bound state at equilibrium). To see why this is the case, we may consider the two strong-connected components of the model when \(D=0\): the drug-bound state (denoted \(x_D\)), and drug-free states. We see that there is no flux between the latter to  the former when \(D=0\), but the flux in the opposite direction is positive provided \(x_D > 0\). Hence as \(t \rightarrow \infty\) we have \(x_D = 0\). It follows that as \(\mathbf{x} \rightarrow \mathbf{x}_\infty\) as \(t \rightarrow \infty\) where \(\mathbf{x}_\infty\) is identical to the steady state of the drug-free model, but with an additional zero-valued entry corresponding to the drug-bound state.

%For instance, while the \citet{kemp_electrophysiological_2021} Model shown in Figure~\ref{fig:IKr_models} is reversible, the drug-augmented model shown in Figure~\ref{fig:IKr_models} does not automatically satisfy the Condition~\eqref{eqn:app_kolmogorov_condition}, and so, is not a reversible model. 

When this is the case, it is clear that for all \(i\) corresponding to a drug-bound state, we have \(x_i \rightarrow 0\)  as \(t \rightarrow \infty\). Consequently, we can consider a Markov model with the drug-bound states removed, and providing this model is irreducible, we can see that a unique stable equilibrium exists through Proposition~\ref{thm:unique_global_equilibrium}.
On the other hand, when \(D > 0\), we can consider the full Markov model in its entirety.
Because Conditions~\eqref{eqn:Q_conditions_start}--\eqref{eqn:app_Q_conditions_end} are satisfied and the digraph is strongly connected (see Section~\ref{sec:connectedness}), we know that a unique steady state exists. That is to say, the results introduced in Section~\ref{sec:app_steady_states} may be applied directly to this special case, even though the complete model is not irreducible, strictly-speaking.

To demonstrate this fact, we choose the \citet{wang_quantitative_1997} model of I\textsubscript{Kr} and augment this with a single `drugged state' connected to channel model's open state. 
However, these methods can be applied to any number of alternative ion-channel model topologies and drug-binding schemes such as those as discussed in \citet{mangold2017mechanisms, lei2024impact, patten2025optimizing}. 

By considering the detailed balance equations at equilibrium or by applying Theorem~\ref{thm:kolmogorov_reversibility} (noting that there are no cycles in the model), we see that the \citet{wang_quantitative_1997} model is reversible whether the drug is present or not (that is, \(D>0\) or \(D=0\)). Thus, in either case, Proposition~\ref{prop:reversible_real_eigenvalues} guarantees that the transition-rate matrix, \(\mathbf{Q}\), has only real eigenvalues and, consequently, that this model cannot produce damped oscillatory solutions.

We may simulate this model using the model-reduction methods described in Section~\ref{sec:computational}.  We simulate this model using \texttt{scipy}'s ``LSODA'' routine which adaptively switches integration method when it detects stiffness/non-stiffness in the solution \citep{postawa2020comprehensive}. Absolute and relative solver tolerances were set to \(10^{-12}\). 
The parameters used were those presented by \citet{wang_quantitative_1997} corresponding to a 2\,mM concentration of KCl.
A simulation of this model under various drug concentrations (\(D\)) is shown in Figure~\ref{fig:drug_model_simulation}, where we see, after starting from some non-equilibrium initial conditions, the model decays to some equilibrium with and without a positive drug concentration, as we would expect; such equilibria are unique, and are guaranteed to exist as discussed above.

We set the drug-binding parameter \(k_\text{on}=10^{-1}\), and the corresponding unbinding parameter, \(k_\text{off}=10^{-2}\). 
Note that for real-life applications, these parameters would be fitted to represent the binding kinetics of a particular drug, whereas the other model parameters should be independent of the particular drug being considered.

\begin{figure}[htbp]
    \centering
    \includegraphics{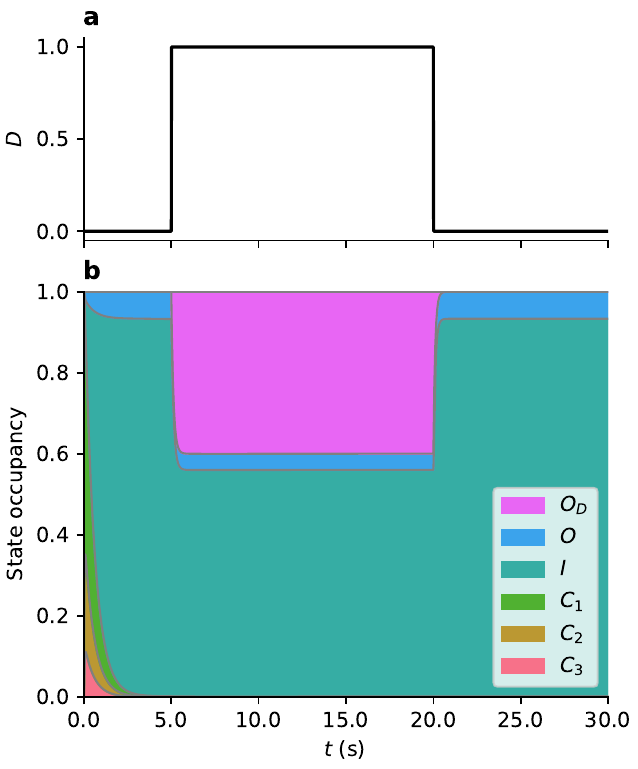}
    \caption{
    Stable equilibria for the drug-bound ion-channel model exist with and without the presence of drugs. Simulation of the \citet{wang_quantitative_1997} Model augmented with a drug-bound open state. In the absence of drug \(D=0\), the model can be seen as a reducible Markov model, where channels in the drug-bound state slowly unbind and transition to drugless states. For \(D=1\), the model topology is strongly connected and so, may simply be treated as a single irreducible Markov model. 
    \textbf{a}: time-course showing the application of drug when \(1 \leqslant t \leqslant 3\) (\(D=1\)), and absence of drug \(D=0\) otherwise. 
    \textbf{b}: The proportion of channels in each state at time \(t\) according to the model. Two steady states are clearly shown: one when \(D=0\) (without drug), and one where \(D=1\) (with drug). Moreover, once the drug is removed, the system returns to the drugless steady state. Whether \(D=0\) or \(D>0\) these steady states are unique, as discussed above.
    }
    \label{fig:drug_model_simulation}
\end{figure}

\subsubsection{Removing a state variable}
In Section~\ref{sec:computational}, we described how the governing Equation~\eqref{eqn:governing_equation} may be simplified using different choices of a particular transformation matrix (denoted \(T\)). By mapping a state, or linear combination of states, to \(0\), this allows the governing equation to be rewritten with one fewer state variables. Here, we explore the numerical properties of the Wang model when transformed by various \(\mathbf T\) matrices. Firstly, we rewrite the model in \(5\) different ways by removing each of the model's five state variables. We then compare these approaches against the full form of the model using the unaltered \(\mathbf{Q}\)-matrix (as in Equation~\eqref{eqn:governing_equation}).

Again, use the \citet{wang_quantitative_1997} model, and simulate it using \texttt{scipy}'s ``LSODA'' routine. For each simulation, relative and absolute solver tolerances were set to the same value.
Results of this approach are shown in Figure~\ref{fig:model_simplification_approach}, and we perform simulations for a range of values. Here, it seems all the different reductions schemes are similarly suitable---in each case, the simplified model requires far fewer steps than the full model to achieve a similar level of accuracy. Though it remains to be seen if different choices of \(\mathbf{T}\) require less solver steps for this, and other, example models. 

\begin{figure}[htbp]
    \centering
    \includegraphics{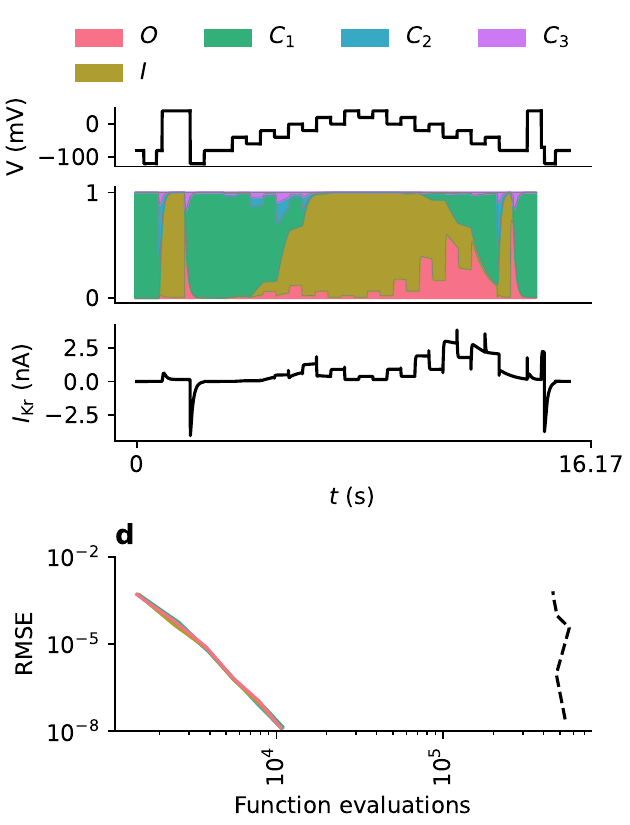}
    \caption{
    Regardless of which state is eliminated from the model, significantly less function evaluations are required to simulate the \citet{wang_quantitative_1997} model. \textbf{a} A staircase-like voltage signal used in \citet{lei_rapid_2019-1} is input to the model. 
    \textbf{b}: the proportion of ion channels in each of the five states at time \(t\). 
    \textbf{c}: The resulting current \(I_\textrm{Kr}\) with the maximal conductance, \(g_\text{Kr}\), set to \SI{0.1524}{\micro\siemens}. 
    \textbf{d}: the number of function evaluations required to simulate the model and the corresponding root-mean-square error (RMSE) between the solution and a reference solution (for which both the relative and absolute solver tolerances were set to \(10^{-12}\)). For a range of tolerances between \(10^{-2}\) and \(10^{-8}\), we show the computed error between each version of the model (with a given state removed, or computed using the entire \(\mathbf{Q}\) matrix) .
    Here, eliminating each state (those shown in the legend) at the top of the figure  results in a similar amount of error
   (compared to a reference solution obtained with fine tolerances).
   \textbf{e}: The error when eliminating a state (here, state O) is significantly less than when the model is simulated using the full transition-rate matrix, with no state variables removed (black dashed line).
   }
    \label{fig:model_simplification_approach}
\end{figure}

\subsection{Example II: Exploring the fast-slow dynamics of an auxin signalling pathway} \label{sec:example_auxin}
Nuclear signalling pathways play an important role in the response of living cells (both plant and animal) to environmental cues. 
The modelling of such signalling pathways is a ubiquitous challenge encountered in many areas of mathematical biology \citep{klipp_mathematical_2006}. 
Models can quickly become infeasibly complicated when large signalling pathways are considered in their entirety. 
Here, we present an example of a signalling pathway model, where the presence of a Markov-model subnetwork allows for model simplification.

Auxin is a plant hormone which, amongst other things, controls the expression of genes relating to cell growth and maturation \citep{leyser_auxin_2018}. 
Auxin interacts with the signalling pathway by promoting the degradation of Aux/IAA inhibitor proteins. Our model describes a concentration of Aux/IAA proteins, denoted \(I\), and a concentration of Auxin Response Factors (ARFs), denoted \(A\).
As in previous work, such variables may represent either specific members of the Aux/IAA and ARF gene families, or some combinations of genes \citep{middleton_mathematical_2010, farcot_modular_2015}.

Individual ARF and Aux/IAA proteins are known to bind together to form \emph{dimers}. We denote the concentration of a dimer formed by a single ARF and Aux/IAA as \(D_{A, I}\).
For the sake of parsimony, we assume that \(D_{A, I} = D_{I, A}\). We also consider ARF-ARF homodimers with concentration denoted by \(D_{A,A}\), and Aux/IAA-Aux/IAA homodimers with concentrations denoted by \(D_{I, I}\).
Then, \(D_{A, A}\),  \(D_{A, I}\), and \(D_{I, I}\) are all state-variables representing the concentrations of dimers in the nucleus. 

Following \cite{shuttleworth_general_2026} and assuming that all reactions between the various species (that is proteins and protein complexes) follow the mass-action law, we have,
\begin{align}
    \odv{A}{t}&= k_A - d_A A - 2A^2k_{A, A} - k_{A, I} A I + 2 d_{A, A}D_{A, A}   + d_{A, I}D_{A, I},\label{eqn:arfA_ode}\\
    \odv{I}{t} &= k_I - d_I I - AIk_{A, I}  - 2I^2 k_{I, I} + 2d_{I, I}D_{I, I} + d_{A, I} D_{A, I}, \label{eqn:I_ode} \\
    \odv{D_{A, A}}{t} &= A^2 k_{A, A} - d_{A, A} D_{A, A}, \label{eqn:d_aa_ode}\\
    \odv{D_{A, I}}{t} &= A I k_{A, I} - d_{A, I} D_{A, I}, \label{eqn:d_ai_ode} \\
    \odv{D_{I, I}}{t} &=  - d_{I , I} D_{I, I} + I^2 k_{I, I} \label{eqn:d_ii_ode}
\end{align}
where \(k_{A, A}, k_{A, I},\)  and \(k_{I, I}\) are association rates that control the rate at which dimers are formed; \(d_{A, A}, d_{A, I},\)  and \(d_{I, I}\) are the corresponding dissociation rates; and \(k_A\) and \(d_A\), and \(k_I\) and \(d_I\) are the production and decay rates for ARFs and Aux/IAA proteins, respectively. 
The production and decay rates for ARFs, that is \(k_A\) and \(d_A\), and the Aux/IAA production rate are assumed to be constant, whereas Aux/IAA decays more rapidly in the presence of auxin.
This fact is what enables the signalling network to respond to changes in auxin concentration, because these changes in concentration affect the rate at which target genes are transcribed.

This response to auxin enables many important processes such as cell growth, division and differentiation, for example.
In this model, the effect of an increased/decreased auxin concentration in the cell can be achieved by increasing/decreasing the Aux/IAA decay parameter, \(d_I\), which is assumed constant.
In this section, we consider only the dynamics of the signalling pathway under steady-state concentrations of auxin---in which case, we assume that \(d_I\) is constant over the entire time-course.

It is natural to use a Markov model to describe the binding/unbinding of complex proteins to DNA. Here individual protein complexes (such as  individual proteins and dimers) may bind to and unbind from the DNA's promoter region at rates proportional to their respective concentrations. 
We write this part of the model as,
\begin{equation}
    \odv{\mathbf{g}}{t} = \dfrac{1}{\tau_G} \mathbf Q_\mathbf{g}(\mathbf{x}, \boldsymbol{\theta})^\top \mathbf{g}\,, \label{eqn:promoter_Q_equation}
\end{equation}
where \begin{equation}
    \mathbf{g} = \begin{pmatrix}
        G \\
        G_{A, A} \\ 
        G_{A, I}
    \end{pmatrix},
\end{equation} 
is a vector comprising the proportions of all possible promoter configurations (unbound, bound to an ARF-ARF dimer and bound to an ARF-Aux/IAA, respectively), \(\mathbf{Q}_\mathbf{g}\) is a transition-rate matrix which depends on the models state (\(\mathbf{x}\)) and parameter (\(\boldsymbol{\theta}\)) vectors, and \(\tau_G\) is some representative timescale that dictates the overall rate of the dynamics, but does not affect the steady state.
For fixed \(\mathbf{x}\) and \(\boldsymbol{\theta}\), this portion of the model is a Markov model as described in Section~\ref{sec:introduction}.
This transition-rate matrix may be parameterised so that it reflects affinities for protein-promoter kinetics. 

The above model equations, along with the computer code implementation of the model, were produced using the \textit{auxin\_signalling\_models} package (v0.3.1) which may found on GitHub. Further details are provided in Section~\ref{sec:data_availability}

This Markov model then describes the probability that a given promoter is bound to one of each of the model's monomers or dimers at a given time. 
These promoter-state variables dictate the transcription rates which ultimately control the production of ARF and Aux/IAA monomers, as described in Equations~\eqref{eqn:arfA_ode} and \eqref{eqn:I_ode}. Conversely, the concentrations of ARFs, Aux/IAAs and dimers thereof dictates the transition rates between states in the promoter Markov model. This interaction and feedback between the Markov-model component of the model, and the concentration of ARF and Aux/IAA proteins, is shown in Figure~\ref{fig:auxin_diagram}. 

\begin{figure}[htbp]
    \centering
    \includegraphics{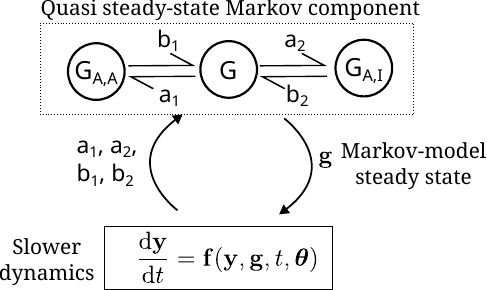}
    \caption{Schematic showing a Markov model as a component of a larger ODE model. When the transition rates in the Markov model are fast compared to other dynamics of the model, a quasi steady-state assumption may be used. In this case, the unique steady state of the Markov model is determined by the \(\mathbf{y}\) state variables described by the rest of the model. In turn, the quasi steady-state variables the model is used to compute \(\odv{\mathbf{y}}{t}\) to simulate the model. For the auxin model described in the article, the Markov model component corresponds to the binding of complexes with the promoter region of the DNA, and the remaining state variables govern the production, decay and dimerisation of proteins, and the production and decay of mRNA.
    }
    \label{fig:auxin_diagram}
\end{figure}

Each promoter state corresponds to a different transcription rate for each species of mRNA. There are two types of mRNA in our model, with concentrations denoted by \(R_A\) and \(R_I\), that drive the expression of ARF and Aux/IAA proteins, respectively. Each  mRNA concentration, \(R_i\), is governed by an ODE of the form, 
\begin{align}
    \odv{R_i}t &= K_i(\mathbf{g}, \boldsymbol{\theta}) - d_{R} R_i,
\end{align}
where \(d_{R}\) is the assumed (time-independent) decay rate for mRNA (assumed equal for both types of mRNA described by the model), and \(K_i(\mathbf{g}, \boldsymbol{\theta})\) is a function which maps the promoter states and parameter vectors to some transcription rate,
\begin{equation}
K_i(\mathbf{g}, \boldsymbol{\theta}) = K_{\text{max}, i} G_{A, A} + K_{0, i} G,
\end{equation}
where \(K_{\text{max}, i}\) is a constant parameter which quantifies the rate of protein translation when an ARF-ARF homodimer is bound to the promoter, and \(K_{0, i}\) is the basal rate of transcription which occurs when no ARF nor Aux/IAA protein is bound to the promoter. 
Here, we assume that ARF-Aux/IAA dimers have a repressive effect on both ARF and Aux/IAA transcription and consequently, the promoter state \(G_{A, I}\) has no associated transcription rate.
Such a motif is found in Arabidopsis thaliana (particularly in early embryos), consisting of  ARF, ARF5/Monopteros (MP) and the Aux/IAA, IAA12/Bodenlos (BDL) \citep{lau_auxin_2011}.

 For small \(\tau_G\), where promoter binding/unbinding occurs relatively quickly, we can approximate the system using a \emph{quasi steady-state assumption} to define the \emph{slow-manifold model},
\begin{align}
    \label{eqn:reduced_model_start}
    \odv{\mathbf{y}}{t} &=\mathbf{f}(\mathbf{y}, \mathbf{g}, t; \boldsymbol{\theta}),\\
    \mathbf{g} &=\mathbf G_\infty(\mathbf{y}, \boldsymbol{\theta}), \label{eqn:reduced_model_end}
\end{align}
where \(\mathbf{f}(y, t; p)\) is the vector-valued function describing the evolution of 
the state variables as in Equation~\eqref{eqn:arfA_ode}--\eqref{eqn:d_ii_ode}, and the function \(\mathbf{G}\) maps these state variables, \(\mathbf{y}\), and our parameter vector, \(\boldsymbol{\theta}\) to the corresponding equilibrium point of the promoter Markov model. 
This equilibrium point is determined by finding the non-trivial solution to Equation~\eqref{eqn:promoter_Q_equation} with elements summing to one, that is, \(\mathbf{1}^\top \mathbf{g} = 1\).  Such an equilibrium point is guaranteed to exist by Proposition~\ref{thm:unique_global_equilibrium}.
The simpler slow-manifold model is then a suitable approximation for the full model when \(\tau_G\) is small \citep{battelli_pseudo-steady-state_1985}.
The assumption that these protein-promoter interactions occur on a much faster timescale than the rest of the model is used in previous models which do not include an explicit description of promoter-protein binding dynamics \citep{middleton_mathematical_2010, bridge2012distinguishing, farcot_modular_2015}.
Nevertheless, simplifying the model in this way reduces the dimensionality of the dynamical system.

Furthermore, we may find the steady state of the promoter Markov model in the above model explicitly,
\begin{align}
    G &= \dfrac{1}{1 + \lambda_{A, A} + \lambda_{A, I} },\, \\ 
    G_{A, A} &= \dfrac{\lambda_{A, A}}{
    1 + \lambda_{A, A} + \lambda_{A, I}
    },\,  \\ 
   G_{A, I} &= \dfrac{\lambda_{A, I}}{
    1 + \lambda_{A, A} + \lambda_{A, I}}\,,
\end{align}
where \(\lambda_{i, j}\) are ratios of transition rates in the Markov model component,
\begin{align}
    \lambda_{A, A} &= D_{A, A} \dfrac{k_{G, D_{A, A}}}{d_{G, D_{A, A}}}\,, \\
    \lambda_{A, I} &= D_{A, I} \dfrac{k_{G, D_{A, I}}}{d_{G, D_{A, I}}}\,.
\end{align}
Hence, in this case, the steady state of the Markov model depends only on the ratio of pairs of transition rates. This fact allows for a reduction in the number of necessary model parameters under the quasi steady-state assumption outlined above.

Figure~\ref{fig:degenerate_markov_model} shows the behaviour of the model with and without the quasi steady-state assumption, and under the parameter values given in Table~\ref{tab:asp_model_parameters_full}. This figure also shows how the resultant model trajectories converge to the quasi steady-state approximation model as \(\tau_G\) becomes small. Whilst the error between the full model and the approximate model decreases for smaller \(\tau_G\), the number of RHS evaluations required by the solver increases due to the increasing stiffness of the system. Accordingly, we see that for small \(\tau_g\), the quasi steady-state model provides a good approximation of the full model, and its usage reduces computational expense.

%\subsubsection{Full model parameters}
%Dict("d_{R_{}}" => 0.1, "k_{G_{ARF_1, R, iaa_1}}" => 1.0, "k_{D_{iaa_1, iaa_1}}" => 0.05, "k_{D_{ARF_1, ARF_1}}" => 0.05, "d_{D_{ARF_1, iaa_1}}" => 0.01, "k_{G_{ARF_1, ARF_1}}" => 0.1, "d_{D_{ARF_1, ARF_1}}" => 0.01, "k_{G_{ARF_1, iaa_1, R, iaa_1}}" => 0.0, "k_{G_{ARF_1, R, ARF_1}}" => 1.0, "d_{arf_{all}}" => 2.0, "k_{G_{ARF_1, ARF_1, R, ARF_1}}" => 1.0, "d_{iaa_{all}}" => 0.1, "d_{G_{ARF_1, ARF_1}}" => 0.01, "k_{G_{R, iaa_1}}" => 0.01, "k_{D_{ARF_1, iaa_1}}" => 0.05, "k_{G_{ARF_1, iaa_1, R, ARF_1}}" => 0.0, "d_{iaa_1_{x}}" => 1.0, "k_{G_{ARF_1, ARF_1, R, iaa_1}}" => 1.0, "d_{D_{iaa_1, iaa_1}}" => 0.01, "k_{RNA_{ARF}}" => 1.0, "k_{RNA_{IAA}}" => 1.0, "k_{G_{R, ARF_1}}" => 0.0001, "k_{G_{ARF_1, iaa_1}}" => 0.1, "d_{G_{ARF_1, iaa_1}}" => 0.01)

% The model parameters used for the simulation and analysis of this model in are provided in Table~\ref{tab:asp_model_parameters_full}.

\begin{table}[htbp]
    \centering
    \begin{tabular}{l c c}
    \toprule
    Related process & Parameter & value \\
    \midrule
   Dimerisation &    \(k_{I, I}\) & 0.15\\
  &  \(d_{I, I}\) & 0.01 \\
  &  \(k_{A, A}\) & 0.05\\
  &  \(d_{A, A}\) & 0.01 \\
  &  \(k_{{A, I}}\) & 0.05\\
  &  \(d_{A, I}\) & 0.01 \\
  \midrule
Monomer degradation  &   \(d_A\) & 2.0 \\
  &  \(d_{I}\) & 0.1 \\
 Monomer production &  \(k_{{A}}\) & 1.0 \\
  &  \(k_{{I}}\) & 1.0\\
  \midrule
  RNA degradation
  &  \(d_{R}\) &  0.1 \\
  \midrule
 Aux/IAA mRNA transcription 
   &  \(K_{\text{max}, R_I}\) & 1.0 \\
  &  \(K_{{0, R_I}}\) & 0.01 \\
  \midrule
   ARF mRNA transcription 
  &  \(K_{\text{max}, R_A}\) & 1.0 \\
  &  \(K_{0, R_A}\) & 0.0001\\
    \bottomrule
    \end{tabular}
    \caption{List of model parameters and associated values used to simulate and analyse our model of the auxin-signalling pathway. All  promoter-protein  association/dissociation rates  are set to \(1\). Note that a different transcription rate is given for each state of the promoter in the model. The effective transcription-rate is obtained by taking an average of these rates, weighted by the probability that the promoter is in each state.}
    \label{tab:asp_model_parameters_full}
\end{table}

\begin{figure}[htbp]
    \centering
    \includegraphics[width=\textwidth]{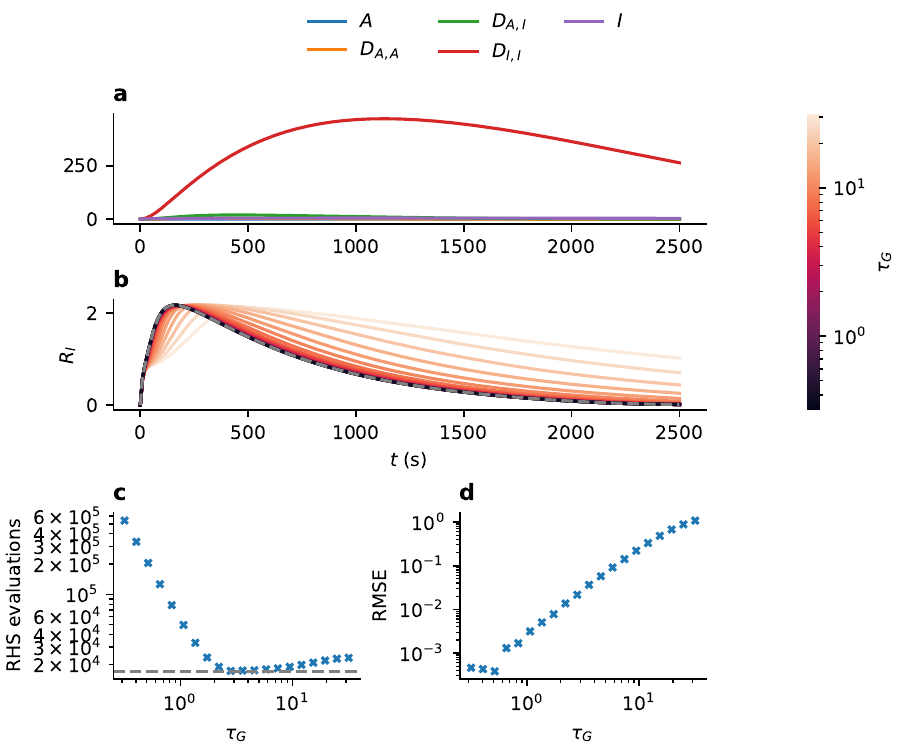}
    \caption{
    The full model becomes well-approximated by the slow-manifold model as the promoter timescale, \(\tau_G\) decreases. 
    Simultaneously the full model becomes more expensive to simulate as \(\tau_G\) decreases. 
    Panel \textbf{a}: a simulation of the slow-manifold model. 
    Panel \textbf{b}: trajectories of the full model for various values of \(\tau_G\) showing the Aux/IAA mRNA concentration, \(R_\text{I}\). 
    Here the output of slow-manifold model is shown by the grey, dashed line. 
    Panel \textbf{c}: a scatterplot showing how the number of RHS evaluations required to compute the above trajectories depends on \(\tau_G\). 
    The number of RHS evaluations required by the slow-manifold model is shown by the horizontal, grey, dashed line. 
  \textbf{d}: root-mean-square error between the above \(R_I\) time series from the slow-manifold model and the full model for various values of \(\tau_G\).}
    \label{fig:degenerate_markov_model}
\end{figure}

This reduction in the number of state variables also proves useful when exploring the steady-state behaviour of such models \citep{shuttleworth_general_2026}.

\section{Discussion}
Mathematical models may be used in a variety of settings to provide mechanistic and quantitative descriptions of complex biological phenomena.
For certain applications, these models may be expressed as irreducible systems of linear ODEs. Such mathematical models are guaranteed certain properties---chiefly the existence and uniqueness of stable equilibria.
We have mainly discussed models where the derivative function may be expressed in terms of an irreducible matrix. However, as in the case of the drug-binding model discussed in Section~\ref{sec:example_ion_channel}, similar results may also apply to closely-related linear systems of ODEs.
We also saw in Section~\ref{sec:app_reversibility} that microscopic reversibility has consequences for model implementations in code, where eigenvalues/eigenvectors should be assumed to be real-valued only if
the model satisfies microscopic reversibility.

We have also seen that such formulations allow for a reduction in model complexity (number of state variables) which facilitates mathematical investigation, and may lead to significant performance gains in computer code. Such performance gains are particularly valuable when fitting models---as are the aforementioned mathematical guarantees regarding steady states. The fact that the conservation condition allows for the removal of one state variable is well known \citep{colquhoun1995q, fink_markov_2009}, though the choice of which state (or linear combination of states) to remove is not obvious. In fact, we saw that this choice was of little importance for the example presented in Section~\ref{sec:example_ion_channel}: no matter which state was removed, a similar number of function evaluations were required to provide a given accuracy.
However, this conclusion may change in new settings, particularly in conditions where transitions between certain states are much faster than others.
Further work identifying schemes for model reduction which improve the numerical properties of the resulting simplified ODE systems could further improve the implementation of Markov models. 

Our second example demonstrates how the mathematical properties of Markov models allows for model simplification of signalling pathway models, which typically include protein-protein binding, and protein-DNA binding, two processes which operate at different time scales and in completely different concentration ranges. This justifies model simplification via a quasi steady-state assumption, allowing for more efficient simulation, a reduction in the number of necessary model parameters, and possibly easier analysis of the resulting dynamical systems. 
Here, a simple model was introduced for demonstrative purposes, but it would be natural to extend this model to include multiple ARF species and multiple Aux/IAA species \citep{bridge2012distinguishing}. In such a case, our quasi steady-state assumption is yet more useful, removing even more state variables from the ODE system. 

Other examples of irreducible Markov models embedded in larger ODE mathematical biology models include other models of gene regulations \citep{kar_control_2016}, and I\textsubscript{K1} in cardiac action-potential models \citep{luo_dynamic_1994, ten_tusscher_model_2004}, which can be interpreted as a similar quasi steady-state formulation.
For any such model, the examples presented above show how the mathematical properties of irreducible Markov models can be used to simplify models and analyse the steady-state behaviour of systems, even when the irreducible Markov model forms but a small subnetwork of some larger system. 

\section*{Data Accessibility}
\noindent
All codes used for model generation, simulation, and the production of figures is available on \href{https://github.com/joeyshuttleworth/irreducible_markov_models}{GitHub} at \url{https://github.com/joeyshuttleworth/irreducible_markov_models}.
~\label{sec:data_availability}

\section*{Acknowledgments}
This work was supported by the Wellcome Trust (grant no. 212203/Z/18/Z); and the Leverhulme Trust (grant no. RPG-2024-061).
GRM acknowledges support from the Wellcome Trust via a Senior Research Fellowship. 
JGS and EF acknowledge support from the Leverhulme Trust.
This research was funded in whole, or in part, by the Wellcome Trust [212203/Z/18/Z]. 
For the purpose of open access, the author has applied a CC-BY public copyright licence to any Author Accepted Manuscript version arising from this submission.

\clearpage
\bibliography{references}

@article{crooks2011thermodynamic,
	doi = {10.1088/1742-5468/2011/07/P07008},
  title={On thermodynamic and microscopic reversibility},
  author={Crooks, Gavin E},
  journal={Journal of Statistical Mechanics: Theory and Experiment},
  volume={2011},
  number={07},
  pages={P07008},
  year={2011},
  publisher={IOP Publishing}
}

@article{ederer2007thermodynamically,
	doi = {10.1529/biophysj.106.094094},
  title={Thermodynamically feasible kinetic models of reaction networks},
  author={Ederer, Michael and Gilles, Ernst Dieter},
  journal={Biophysical journal},
  volume={92},
  number={6},
  pages={1846--1857},
  year={2007},
  publisher={Elsevier}
}

@book{wilkinson2018stochastic,
author = {Wilkinson, Darren James},
address = {Boca Raton},
booktitle = {Stochastic modelling for systems biology},
edition = {Third edition.},
isbn = {9781138549289},
language = {eng},
lccn = {2018040390},
publisher = {CRC Press},
title = {Stochastic modelling for systems biology},
year = {2019},
}

@article{patten2025optimizing,
	doi = {10.1098/rsta.2024.0227},
  title={Optimizing experimental designs for model selection of ion channel drug-binding mechanisms},
  author={Patten-Elliott, Frankie and Lei, Chon Lok and Preston, Simon P and Wilkinson, Richard D and Mirams, Gary R},
  journal={Philosophical Transactions A},
  volume={383},
  number={2292},
  pages={20240227},
  year={2025},
  publisher={The Royal Society}
}

@article{gupta_stochastic_2011,
	title = {Stochastic {State} {Transitions} {Give} {Rise} to {Phenotypic} {Equilibrium} in {Populations} of {Cancer} {Cells}},
	volume = {146},
	issn = {0092-8674, 1097-4172},
	doi = {10.1016/j.cell.2011.07.026},
	language = {English},
	number = {4},
	urldate = {2026-05-12},
	journal = {Cell},
	author = {Gupta, Piyush B. and Fillmore, Christine M. and Jiang, Guozhi and Shapira, Sagi D. and Tao, Kai and Kuperwasser, Charlotte and Lander, Eric S.},
	month = aug,
	year = {2011},
	pmid = {21854987},
	pages = {633--644},
}

@article{kepler_stochasticity_2001,
	title = {Stochasticity in {Transcriptional} {Regulation}: {Origins}, {Consequences}, and {Mathematical} {Representations}},
	volume = {81},
	copyright = {https://www.elsevier.com/tdm/userlicense/1.0/},
	issn = {00063495},
	shorttitle = {Stochasticity in {Transcriptional} {Regulation}},
	url = {https://linkinghub.elsevier.com/retrieve/pii/S0006349501759498},
	doi = {10.1016/S0006-3495(01)75949-8},
	language = {en},
	number = {6},
	urldate = {2026-05-12},
	journal = {Biophysical Journal},
	author = {Kepler, Thomas B. and Elston, Timothy C.},
	month = dec,
	year = {2001},
	pages = {3116--3136},
}

@article{mangold2017mechanisms,
	doi = {10.1080/19336950.2017.1369637},
  title={Mechanisms and models of cardiac sodium channel inactivation},
  author={Mangold, Kathryn E and Brumback, Brittany D and Angsutararux, Paweorn and Voelker, Taylor L and Zhu, Wandi and Kang, Po Wei and Moreno, Jonathan D and Silva, Jonathan R},
  journal={Channels},
  volume={11},
  number={6},
  pages={517--533},
  year={2017},
  publisher={Taylor \& Francis}
}

@article{lei2024impact,
  title={The impact of uncertainty in hERG binding mechanism on in silico predictions of drug-induced proarrhythmic risk},
  author={Lei, Chon Lok and Whittaker, Dominic G and Mirams, Gary R},
  journal={British Journal of Pharmacology},
  doi={10.1111/bph.16250},
  volume={181},
  number={7},
  pages={987--1004},
  year={2024},
  publisher={Wiley Online Library}
}

@article{beattie_sinusoidal_2018,
	title = {Sinusoidal voltage protocols for rapid characterisation of ion channel kinetics},
	volume = {596},
	copyright = {© 2018 The Authors. The Journal of Physiology published by John Wiley \& Sons Ltd on behalf of The Physiological Society},
	issn = {1469-7793},
	doi = {https://doi.org/10.1113/JP275733},
	language = {en},
	number = {10},
	journal = {The Journal of Physiology},
	author = {Beattie, Kylie A. and Hill, Adam P. and Bardenet, Rémi and Cui, Yi and Vandenberg, Jamie I. and Gavaghan, David J. and Boer, Teun P. de and Mirams, Gary R.},
	year = {2018},
	pages = {1813--1828}
}

@article{fink_markov_2009,
	title = {Markov models for ion channels: versatility versus identifiability and speed},
	volume = {367},
	shorttitle = {Markov models for ion channels},
	url = {https://royalsocietypublishing.org/doi/10.1098/rsta.2008.0301},
	doi = {10.1098/rsta.2008.0301},
	number = {1896},
	journal = {Philosophical Transactions of the Royal Society A: Mathematical, Physical and Engineering Sciences},
	author = {Fink, Martin and Noble, Denis},
	month = jun,
	year = {2009},
	pages = {2161--2179},
}

@article{clerx_four_2019,
	title = {Four Ways to Fit an Ion Channel Model},
	volume = {117},
	issn = {0006-3495},
	doi = {10.1016/j.bpj.2019.08.001},
	language = {en},
	number = {12},
	journal = {Biophysical Journal},
	author = {Clerx, Michael and Beattie, Kylie A. and Gavaghan, David J. and Mirams, Gary R.},
	month = dec,
	year = {2019},
	pages = {2420--2437},
}

@article{ten_tusscher_model_2004,
	title = {A model for human ventricular tissue},
	volume = {286},
	issn = {0363-6135, 1522-1539},
	url = {https://www.physiology.org/doi/10.1152/ajpheart.00794.2003},
	doi = {10.1152/ajpheart.00794.2003},
	language = {en},
	number = {4},
	
	journal = {American Journal of Physiology-Heart and Circulatory Physiology},
	author = {ten Tusscher, K. H. W. J. and Noble, D. and Noble, P. J. and Panfilov, A. V.},
	month = apr,
	year = {2004},
	pages = {H1573--H1589},
}

@article{postawa2020comprehensive,
	doi = {10.1016/j.renene.2020.04.089},
  title={{A comprehensive comparison of ODE solvers for biochemical problems}},
  author={Postawa, Karol and Szczygie{\l}, Jerzy and Ku{\l}a{\.z}y{\'n}ski, Marek},
  journal={Renewable Energy},
  volume={156},
  pages={624--633},
  year={2020},
  publisher={Elsevier}
}

@book{BermanAbraham1979Nmit,
series = {Computer science and applied mathematics},
publisher = {Academic Press},
title = {Nonnegative matrices in the mathematical sciences},
isbn = {0120922509},
year = {1979},
language = {eng},
address = {New York ; London},
author = {Berman, Abraham and Plemmons, Robert J},
}

@article{yang2006imposing,
	doi = {10.1529/biophysj.105.071852},
  title={On imposing detailed balance in complex reaction mechanisms},
  author={Yang, Jin and Bruno, William J and Hlavacek, William S and Pearson, John E},
  journal={Biophysical Journal},
  volume={91},
  number={3},
  pages={1136--1141},
  year={2006},
  publisher={Elsevier}
}

@book{bates_nonlinear_1988,
	address = {New York},
	title = {Nonlinear regression analysis and its applications},
	isbn = {0-471-81643-4},
	language = {eng},
	publisher = {Wiley},
	author = {Bates, Douglas M and Watts, Donald G},
	year = {1988},
	lccn = {8806065},
}

@article{colquhoun_how_2004,
	title = {How to Impose Microscopic Reversibility in Complex Reaction Mechanisms},
	volume = {86},
	issn = {0006-3495},
	doi = {10.1529/biophysj.103.038679},
	number = {6},
	
	journal = {Biophysical Journal},
	author = {Colquhoun, David and Dowsland, Kathryn A. and Beato, Marco and Plested, Andrew J. R.},
	month = jun,
	year = {2004},
	pmid = {15189850},
	pmcid = {PMC1304255},
	pages = {3510--3518},
}

@article{mangold2021identification,
	doi = {10.1371/journal.pcbi.1008932},
  title={Identification of structures for ion channel kinetic models},
  author={Mangold, Kathryn E and Wang, Wei and Johnson, Eric K and Bhagavan, Druv and Moreno, Jonathan D and Nerbonne, Jeanne M and Silva, Jonathan R},
  journal={PLoS computational biology},
  volume={17},
  number={8},
  pages={e1008932},
  year={2021},
  publisher={Public Library of Science San Francisco, CA USA}
}

@book{keener2009mathematical,
series = {Interdisciplinary applied mathematics},
publisher = {Springer},
booktitle = {Mathematical Physiology},
isbn = {0387983813},
year = {2009},
title = {Mathematical Physiology},
language = {eng},
address = {New York},
author = {Keener, James P and Sneyd, James},
lccn = {lc98014499},
}

@book{gibbons_algorithmic_1985,
	address = {Cambridge, United Kingdom},
	title = {Algorithmic graph theory},
	isbn = {0-521-24659-8},
	language = {eng},
	publisher = {Cambridge University Press},
	author = {Gibbons, Alan},
	year = {1985},
}

@misc{shuttleworth_general_2026,
	title = {A general mathematical framework for modelling subnetworks of the nuclear auxin pathway},
	copyright = {© 2026, Posted by openRxiv. This pre-print is available under a Creative Commons License (Attribution 4.0 International), CC BY 4.0, as described at http://creativecommons.org/licenses/by/4.0/},
	url = {https://www.biorxiv.org/content/10.64898/2026.08.06.742982v1},
	doi = {10.64898/2026.08.06.742982},
	language = {en},
	urldate = {2026-08-10},
	publisher = {bioRxiv},
	author = {Shuttleworth, Joseph G. and Chan, Emily and Welch, Thomas and Bhosale, Rahul G. and Bishopp, Anthony and Farcot, Etienne},
	month = aug,
	year = {2026},
	note = {ISSN: 2692-8205
Pages: 2026.08.06.742982
Section: New Results},
}

@article{clancy_na_2002,
	title = {Na+ {Channel} {Mutation} {That} {Causes} {Both} {Brugada} and {Long}-{QT} {Syndrome} {Phenotypes}},
	volume = {105},
	url = {https://www.ahajournals.org/doi/full/10.1161/hc1002.105183},
	doi = {10.1161/hc1002.105183},
	number = {10},
	
	journal = {Circulation},
	author = {Clancy, Colleen E. and Rudy, Yoram},
	month = mar,
	year = {2002},
	comment = {Publisher American Heart Association},
	pages = {1208--1213},
}

@book{adrian_bondy_graph_2008,
	edition = {3rd},
	series = {Graduate Texts in Mathematics},
    number={244},
	title = {Graph Theory},
	isbn = {1-84628-969-6 978-1-84628-969-9},
	publisher = {Springer},
	author = {Bondy, Adrian and Murty, U.S.R. },
	year = {2008},
    volume={244},
    address={London},
}

@book{kelly_reversibility_1981,
	title = {Reversibility and stochastic networks},
	author = {Kelly, F.P.},
	month = jun,
	year = {1981},
   publisher={Cambridge University Press},
   address={Cambridge, United Kingdom}
}

@article{lei_rapid_2019-1,
	title = {Rapid Characterization of {hERG} Channel Kinetics {I}: {Using} an Automated High-Throughput System},
	volume = {117},
	issn = {0006-3495},
	shorttitle = {Rapid {Characterization} of {hERG} {Channel} {Kinetics} {I}},
	url = {https://www.sciencedirect.com/science/article/pii/S0006349519305971},
	doi = {10.1016/j.bpj.2019.07.029},
	language = {en},
	number = {12},
	
	journal = {Biophysical Journal},
	author = {Lei, Chon Lok and Clerx, Michael and Gavaghan, David J. and Polonchuk, Liudmila and Mirams, Gary R. and Wang, Ken},
	month = dec,
	year = {2019},
	pages = {2438--2454},
	}

@article{hodgkin_quantitative_1952-1,
	title = {A quantitative description of membrane current and its application to conduction and excitation in nerve},
	volume = {117},
	issn = {1469-7793},
	url = {https://physoc.onlinelibrary.wiley.com/doi/abs/10.1113/jphysiol.1952.sp004764},
	doi = {10.1113/jphysiol.1952.sp004764},
	language = {en},
	number = {4},
	
	journal = {The Journal of Physiology},
	author = {Hodgkin, A. L. and Huxley, A. F.},
	year = {1952},
	comment = {\_eprint: https://physoc.onlinelibrary.wiley.com/doi/pdf/10.1113/jphysiol.1952.sp004764},
	pages = {500--544},
}

@article{li_improving_2017,
	title = {Improving the In Silico Assessment of Proarrhythmia Risk by Combining {hERG} ({Human} {Ether}-à-go-go-{Related} {Gene}) Channel-Drug Binding Kinetics and Multichannel Pharmacology},
	volume = {10},
	url = {https://www.ahajournals.org/doi/10.1161/CIRCEP.116.004628},
	doi = {10.1161/CIRCEP.116.004628},
	number = {2},
	
	journal = {Circulation: Arrhythmia and Electrophysiology},
	author = {Li, Zhihua and Dutta, Sara and Sheng, Jiansong and Tran, Phu N. and Wu, Wendy and Chang, Kelly and Mdluli, Thembi and Strauss, David G. and Colatsky, Thomas},
	month = feb,
	year = {2017},
	comment = {Publisher American Heart Association},
	pages = {e004628},
}

@article{sanguinetti_herg_2006,
	title = {{hERG} potassium channels and cardiac arrhythmia},
	volume = {440},
	issn = {1476-4687},
	doi = {10.1038/nature04710},
	language = {eng},
	number = {7083},
	journal = {Nature},
	author = {Sanguinetti, Michael C. and Tristani-Firouzi, Martin},
	month = mar,
	year = {2006},
	pmid = {16554806},
	pages = {463--469},
}

@book{Horn_Johnson_2012,
address={Cambridge, United Kingdom},
edition={2},
title={Matrix Analysis},
publisher={Cambridge University Press},
author={Horn, Roger A. and Johnson, Charles R.},
year={2012},
isbn= {9780521839402}
}

@article{lin_efficient_2018,
	title = {Efficient analysis of stochastic gene dynamics in the non-adiabatic regime using piecewise deterministic {Markov} processes},
	volume = {15},
	issn = {1742-5689},
	url = {https://doi.org/10.1098/rsif.2017.0804},
	doi = {10.1098/rsif.2017.0804},
	number = {138},
	urldate = {2026-05-08},
	journal = {Journal of The Royal Society Interface},
	author = {Lin, Yen Ting and Buchler, Nicolas E.},
	month = jan,
	year = {2018},
	pages = {20170804},
}

@article{wang_quantitative_1997,
	title = {A quantitative analysis of the activation and inactivation kinetics of {HERG} expressed in {Xenopus} oocytes.},
	volume = {502 ( Pt 1)},
	issn = {0022-3751 1469-7793},
	doi = {10.1111/j.1469-7793.1997.045bl.x},
	language = {eng},
	number = {Pt 1},
	journal = {The Journal of Physiology},
	author = {Wang, S. and Liu, S. and Morales, M. J. and Strauss, H. C. and Rasmusson, R. L.},
	month = jul,
	year = {1997},
	pmid = {9234196},
	pmcid = {PMC1159571},
	pages = {45--60},
}

@article{rudy_computational_2006,
	title = {Computational biology in the study of cardiac ion channels and cell electrophysiology},
	volume = {39},
	issn = {0033-5835},
	doi = {10.1017/S0033583506004227},
	language = {eng},
	number = {1},
	journal = {Quarterly Reviews of Biophysics},
	author = {Rudy, Yoram and Silva, Jonathan R.},
	month = feb,
	year = {2006},
	pmid = {16848931},
	pmcid = {PMC1994938},
	pages = {57--116},
}

@book{meyer_matrix_2000,
publisher = {Society for Industrial and Applied Mathematics},
booktitle = {Matrix analysis and applied linear algebra},
isbn = {0898714540},
year = {2000},
title = {Matrix analysis and applied linear algebra},
language = {eng},
address = {Philadelphia},
author = {Meyer, C. D.},
lccn = {00029725},
}

@incollection{colquhoun1995q,
	doi = {10.1007/978-1-4419-1229-9_20},
  title={A {Q-matrix} cookbook: how to write only one program to calculate the single-channel and macroscopic predictions for any kinetic mechanism},
  author={Colquhoun, David and Hawkes, Alan G},
  booktitle={Single-Channel Recording},
  editor={Sakmann, Bert and Neher, Erwin},
  edition={2nd},
  pages={589--633},
  year={1995},
  publisher={Springer},
  address={Boston, MA}
}

@article{gillespie_deterministic_2009,
	title = {Deterministic {Limit} of {Stochastic} {Chemical} {Kinetics}},
	volume = {113},
	issn = {1520-6106},
	url = {https://doi.org/10.1021/jp806431b},
	doi = {10.1021/jp806431b},
	number = {6},
	
	journal = {The Journal of Physical Chemistry B},
	author = {Gillespie, Daniel T.},
	month = feb,
	year = {2009},
	comment = {Publisher American Chemical Society},
	pages = {1640--1644},
}

@article{battelli_pseudo-steady-state_1985,
	title = {On the pseudo-steady-state approximation and {Tikhonov} theorem for general enzyme systems},
	volume = {75},
	issn = {0025-5564},
	url = {https://www.sciencedirect.com/science/article/pii/0025556485900392},
	doi = {10.1016/0025-5564(85)90039-2},
	number = {2},
	urldate = {2026-08-14},
	journal = {Mathematical Biosciences},
	author = {Battelli, Flaviano and Lazzari, Claudio},
	month = aug,
	year = {1985},
	pages = {229--246}
}

@article{moler2003nineteen,
	doi = {10.1137/S00361445024180},
  title={Nineteen dubious ways to compute the exponential of a matrix, twenty-five years later},
  author={Moler, Cleve and Van Loan, Charles},
  journal={SIAM review},
  volume={45},
  number={1},
  pages={3--49},
  year={2003},
  publisher={SIAM}
}

@article{teed_computationally_2016-1,
	title = {A computationally efficient algorithm for fitting ion channel parameters},
	volume = {3},
	issn = {2215-0161},
	url = {https://www.sciencedirect.com/science/article/pii/S2215016116300395},
	doi = {10.1016/j.mex.2016.11.001},
	
	journal = {MethodsX},
	author = {Teed, Zachary R. and Silva, Jonathan R.},
	month = jan,
	year = {2016},
	pages = {577--588},
}

@article{lau_auxin_2011,
	title = {Auxin triggers a genetic switch},
	volume = {13},
	copyright = {2011 Springer Nature Limited},
	issn = {1476-4679},
	url = {https://www.nature.com/articles/ncb2212},
	doi = {10.1038/ncb2212},
	language = {en},
	number = {5},
	
	journal = {Nature Cell Biology},
	author = {Lau, Steffen and Smet, Ive De and Kolb, Martina and Meinhardt, Hans and Jürgens, Gerd},
	month = may,
	year = {2011},
	pages = {611--615},
}

@article{luo_dynamic_1994,
	title = {A dynamic model of the cardiac ventricular action potential. {I}. {Simulations} of ionic currents and concentration changes},
	volume = {74},
	issn = {0009-7330},
	doi = {10.1161/01.res.74.6.1071},
	language = {eng},
	number = {6},
	journal = {Circulation Research},
	author = {Luo, C. H. and Rudy, Y.},
	month = jun,
	year = {1994},
	pmid = {7514509},
	pages = {1071--1096},
}

@article{kar_control_2016,
	title = {Control of {NFAT} {Isoform} {Activation} and {NFAT}-{Dependent} {Gene} {Expression} through {Two} {Coincident} and {Spatially} {Segregated} {Intracellular} {Ca2}+ {Signals}},
	volume = {64},
	issn = {1097-4164},
	doi = {10.1016/j.molcel.2016.11.011},
	language = {eng},
	number = {4},
	journal = {Molecular Cell},
	author = {Kar, Pulak and Mirams, Gary R. and Christian, Helen C. and Parekh, Anant B.},
	month = nov,
	year = {2016},
	pmid = {27863227},
	pmcid = {PMC5128683},
	pages = {746--759},
}

@article{bridge2012distinguishing,
	doi = {10.1016/j.mbs.2011.10.005},
  title={Distinguishing possible mechanisms for auxin-mediated developmental control in Arabidopsis: models with two Aux/IAA and ARF proteins, and two target gene-sets},
  author={Bridge, Lloyd. J. and Mirams, G. R. and Kieffer, M. L. and King, J. R. and Kepinski, S},
  journal={Mathematical biosciences},
  volume={235},
  number={1},
  pages={32--44},
  year={2012},
  publisher={Elsevier}
}

@book{hirsh_differential_2013,
editor = {Morris W. Hirsch and Stephen Smale and Robert L. Devaney},
title = {Differential Equations, Dynamical Systems, and an Introduction to Chaos},
publisher = {3rd},
edition = {Third Edition},
address = {Boston},
year = {2013},
isbn = {978-0-12-382010-5},
doi = {https://doi.org/10.1016/B978-0-12-382010-5.00055-5},
}

@article{leyser_auxin_2018,
	title = {Auxin {Signaling}},
	volume = {176},
	issn = {0032-0889},
	url = {https://doi.org/10.1104/pp.17.00765},
	doi = {10.1104/pp.17.00765},
	number = {1},
	
	journal = {Plant Physiology},
	author = {Leyser, Ottoline},
	month = jan,
	year = {2018},
	pages = {465--479},
}

@article{middleton_mathematical_2010,
	title = {Mathematical modelling of the {Aux}/{IAA} negative feedback loop},
	volume = {72},
	issn = {1522-9602},
	doi = {10.1007/s11538-009-9497-4},
	language = {eng},
	number = {6},
	journal = {Bulletin of Mathematical Biology},
	author = {Middleton, A. M. and King, J. R. and Bennett, M. J. and Owen, M. R.},
	month = aug,
	year = {2010},
	pmid = {20135237},
	pages = {1383--1407},
}

@article{farcot_modular_2015,
	title = {A {Modular} {Analysis} of the {Auxin} {Signalling} {Network}},
	volume = {10},
	issn = {1932-6203},
	url = {https://dx.plos.org/10.1371/journal.pone.0122231},
	doi = {10.1371/journal.pone.0122231},
	language = {en},
	number = {3},
	
	journal = {PLOS ONE},
	author = {Farcot, Etienne and Lavedrine, Cyril and Vernoux, Teva},
	editor = {Merks, Roeland M.H.},
	month = mar,
	year = {2015},
	pages = {e0122231},
}

@article{klipp_mathematical_2006,
	title = {Mathematical modeling of intracellular signaling pathways},
	volume = {7},
	issn = {1471-2202},
	url = {https://doi.org/10.1186/1471-2202-7-S1-S10},
	doi = {10.1186/1471-2202-7-S1-S10},
	language = {en},
	number = {1},
	
	journal = {BMC Neuroscience},
	author = {Klipp, Edda and Liebermeister, Wolfram},
	month = oct,
	year = {2006},
	pages = {S10},
}

@article{shuttleworth_evaluating_2025,
	title = {Evaluating the predictive accuracy of ion-channel models using data from multiple experimental designs},
	volume = {383},
	url = {https://royalsocietypublishing.org/doi/10.1098/rsta.2024.0211},
	doi = {10.1098/rsta.2024.0211},
	number = {2292},
	
	journal = {Philosophical Transactions of the Royal Society A: Mathematical, Physical and Engineering Sciences},
	author = {Shuttleworth, Joseph G. and Lei, Chon Lok and Windley, Monique J. and Hill, Adam P. and Preston, Simon P. and Mirams, Gary R.},
	month = mar,
	year = {2025},
	pages = {20240211},
}

@book{seneta06,
  address = {New York},
  author = {Seneta, E.},
  isbn = {0387297650},
  publisher = {Springer},
  refid = {209916821},
  title = {Non-negative matrices and Markov chains},
  year = 2006
}
\clearpage

\end{document}